%%%%%%%% ICML 2024 EXAMPLE LATEX SUBMISSION FILE %%%%%%%%%%%%%%%%%

\documentclass{article}

% Recommended, but optional, packages for figures and better typesetting:
\usepackage{microtype}
\usepackage{graphicx}
\usepackage{subfigure}
\usepackage{booktabs} % for professional tables
\usepackage{listings}
\usepackage{enumitem}

% hyperref makes hyperlinks in the resulting PDF.
% If your build breaks (sometimes temporarily if a hyperlink spans a page)
% please comment out the following usepackage line and replace
% \usepackage{icml2023} with \usepackage[nohyperref]{icml2023} above.
\usepackage{hyperref}

% Attempt to make hyperref and algorithmic work together better:

% Use the following line for the initial blind version submitted for review:
% \usepackage{icml2025}

% If accepted, instead use the following line for the camera-ready submission:
\usepackage[accepted]{icml2025}

% For theorems and such
% \usepackage{amsmath}
% \usepackage{amssymb}
% \usepackage{mathtools}
% \usepackage{amsthm}

\usepackage[utf8]{inputenc} % allow utf-8 input
\usepackage[T1]{fontenc}    % use 8-bit T1 fonts
\usepackage{hyperref}       % hyperlinks
\usepackage{url}            % simple URL typesetting
\usepackage{booktabs}       % professional-quality tables
\usepackage{amsmath}
\usepackage{amsfonts}       % blackboard math symbols
\usepackage{nicefrac}       % compact symbols for 1/2, etc.
\usepackage{microtype}      % microtypography
\usepackage{xcolor}         % colors
\usepackage{graphics}
\usepackage{graphicx}
\usepackage{wrapfig}
\usepackage{listings}

\usepackage{enumitem}
\setlist{noitemsep,topsep=0pt,parsep=0pt,partopsep=0pt}

% if you use cleveref..
\usepackage[capitalize,noabbrev]{cleveref}

%%%%%%%%%%%%%%%%%%%%%%%%%%%%%%%%
% THEOREMS
%%%%%%%%%%%%%%%%%%%%%%%%%%%%%%%%
% \theoremstyle{plain}
% \newtheorem{theorem}{Theorem}[section]
% \newtheorem{proposition}[theorem]{Proposition}
% \newtheorem{lemma}[theorem]{Lemma}
% \newtheorem{corollary}[theorem]{Corollary}
% \theoremstyle{definition}
% \newtheorem{definition}[theorem]{Definition}
% \newtheorem{assumption}[theorem]{Assumption}
% \theoremstyle{remark}
% \newtheorem{remark}[theorem]{Remark}

% Todonotes is useful during development; simply uncomment the next line
%    and comment out the line below the next line to turn off comments
%\usepackage[disable,textsize=tiny]{todonotes}
\usepackage[textsize=tiny]{todonotes}

% The \icmltitle you define below is probably too long as a header.
% Therefore, a short form for the running title is supplied here:
% \icmltitlerunning{Demystifying the Dynamics and Regret in Two-sided Markets
% with Population Effects}
\icmltitlerunning{Policy Design for Two-sided Platforms with Participation Dynamics}
% \icmltitlerunning{Policy Optimization under Two-sided Dynamics}

%pseudo code
\usepackage{algorithm}
\usepackage{algorithmic}
\usepackage{bbding}
\usepackage{amssymb,amsmath,amsthm}
\usepackage{bbm}

\newtheorem{theorem}{Theorem}
\newtheorem{proposition}{Proposition}
\newtheorem{observation}{Observation}

\newtheorem{definition}{Definition}

\newtheorem{example}{Example}

% packages
\usepackage{multicol,multirow}
\usepackage{listings}
\definecolor{dkgreen}{rgb}{0,0.6,0}
\definecolor{customgray}{rgb}{0.25,0.25,0.25}
\definecolor{customred}{rgb}{0.8,0.05,0.05}
\definecolor{customblue}{rgb}{0.05,0.05,0.8}
\usepackage{bbding}
\usepackage{comment}
\usepackage{bm}

% comments

% shortcuts

% \newtheorem{assumption}{Assumption}[section]

\newcommand{\RR}{\mathbb{R}}
\newcommand{\II}{\mathbb{I}}
\newcommand{\1}{\bm{1}}
\newcommand{\bpi}{\bm{\pi}}
\newcommand{\blambda}{\bm{\lambda}}

\newcommand{\G}{\mathcal{G}}

\definecolor{dkred}{rgb}{0.8,0,0}
\definecolor{dkpink}{rgb}{1.0,0,0.5}
\definecolor{dkgreen}{rgb}{0,0.4,0}
\definecolor{tickgreen}{rgb}{0,0.6,0}

\begin{document}

\twocolumn[
% \icmltitle{Policy Optimization under Two-sided Dynamics}
\icmltitle{Policy Design for Two-sided Platforms with Participation Dynamics}

% It is OKAY to include author information, even for blind
% submissions: the style file will automatically remove it for you
% unless you've provided the [accepted] option to the icml2023
% package.

% List of affiliations: The first argument should be a (short)
% identifier you will use later to specify author affiliations
% Academic affiliations should list Department, University, City, Region, Country
% Industry affiliations should list Company, City, Region, Country

% You can specify symbols, otherwise they are numbered in order.
% Ideally, you should not use this facility. Affiliations will be numbered
% in order of appearance and this is the preferred way.
\icmlsetsymbol{equal}{*}

\begin{icmlauthorlist}
\icmlauthor{Haruka Kiyohara}{Cornell}
\icmlauthor{Fan Yao}{UVa}
\icmlauthor{Sarah Dean}{Cornell}
\end{icmlauthorlist}

\icmlaffiliation{Cornell}{Cornell University}
\icmlaffiliation{UVa}{University of Virginia}

\icmlcorrespondingauthor{Haruka Kiyohara}{hk844@cornell.edu}
% \icmlcorrespondingauthor{Fan Yao}{fy4bc@virginia.edu}
\icmlcorrespondingauthor{Sarah Dean}{sdean@cornell.edu}

% You may provide any keywords that you
% find helpful for describing your paper; these are used to populate
% the "keywords" metadata in the PDF but will not be shown in the document
\icmlkeywords{Machine Learning, ICML}

\vskip 0.3in
]

% this must go after the closing bracket ] following \twocolumn[ ...

% This command actually creates the footnote in the first column
% listing the affiliations and the copyright notice.
% The command takes one argument, which is text to display at the start of the footnote.
% The \icmlEqualContribution command is standard text for equal contribution.
% Remove it (just {}) if you do not need this facility.

\printAffiliationsAndNotice{}  % leave blank if no need to mention equal contribution
% \printAffiliationsAndNotice{\icmlEqualContribution} % otherwise use the standard text.

\begin{abstract}
In two-sided platforms (e.g., video streaming or e-commerce), viewers and providers engage in interactive dynamics:
viewers benefit from increases in provider populations, while providers benefit from increases in viewer population.
Despite the importance of such ``population effects'' on long-term platform health, recommendation policies do not generally take the participation dynamics into account. This paper thus studies the dynamics and recommender policy design on two-sided platforms under the population effects for the first time. Our control- and game-theoretic findings warn against the use of the standard ``myopic-greedy" policy and shed light on the importance of provider-side considerations (i.e., effectively distributing exposure among provider groups) to improve social welfare via population growth. We also present a simple algorithm to optimize long-term social welfare by taking the population effects into account, and demonstrate its effectiveness in synthetic and real-data experiments. Our experiment code is available at \textcolor{dkpink}{https://github.com/sdean-group/dynamics-two-sided-market}.
\end{abstract}

\section{Introduction} 

Two-sided platforms, where some individuals view content or information provided by other individuals, are ubiquitous in real-world decisions, e.g., video streaming, job matching, and online ads~\citep{boutilier2023modeling}. 
In such applications, viewers and providers may co-evolve and mutually influence each other: providers increase their content production if they receive more attention from viewers (i.e., exposure), and the platform gains more viewers if viewers receive high-quality and favored content (i.e., satisfaction). 
These effects are mediated by the platform's recommendation algorithm.
Considering such \textit{non-stationarity} and \textit{two-sided dynamics} is crucial, as the viewers and providers are affected by each others' population in self-reinforcing feedback loops.

\begin{example}[Video recommendation]
    When a platform has many videos about sports, viewers can expect that top sports videos have high quality (e.g., production and intellect). Meanwhile, if a platform is popular among sports lovers, creators will produce more sports videos to gain more views.
\end{example}

\begin{example}[Job matching] 
    When a platform has many applicants from a target category, companies looking to fill a specific role can identify more highly skilled applicants. 
    On the other hand, if a platform has more openings for a specific job type, more applicants from target categories will register for the service. 
\end{example}

The \textit{``population effects''} in the aforementioned examples strongly affect viewer utility and their long-term satisfaction.
However, their implications for recommendation policy design have been under-explored. 
The conventional formulation of recommendation follows (contextual) bandits~\citep{li2010contextual} and assumes that viewers and providers are static across timesteps. 
Some recent work studies content provider departures~\citep{mladenov2020optimizing, huttenlocher2023matching} and the (negative) impacts on viewer welfare~\citep{yao2024user,yao2023bad}.
However, the viewer population is modeled as static.
In contrast, existing works which consider dynamic viewer populations assume that provider population is fixed~\citep{hashimoto2018fairness, dean2022emergent}.
Therefore, we cannot tell how we should optimize a policy, particularly in the initial launch of the platform when two-sided dynamics exist. Finally, existing works considering strategic content providers~\citep{hron2022modeling, jagadeesan2022supply, yao2023rethinking, yao2024exploration, prasad2023content} model the (strategic) evolution of provider features, assuming that the total number of viewers and providers are fixed. These works cannot tell if a platform can ``grow the pie'' (i.e., viewer and provider populations) to improve long-term welfare.

\begin{figure*}[t]
\centering
\includegraphics[clip, width=0.95\linewidth]{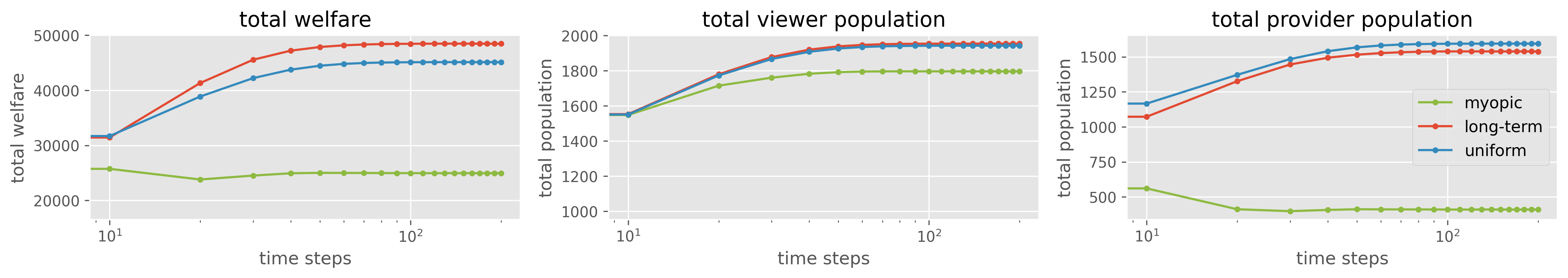}
    \caption{\textbf{Comparing the myopic-greedy policy, the uniform random policy, and the long-term policy in a synthetic simulation.} 
    As shown, the myopic-greedy policy loses the provider population due to concentrated exposure allocation, resulting in the negative impact on the viewer welfare in the long-run.
    The ``long-term'' policy is based on the algorithm proposed in Section~\ref{sec:proposal} (Eq.~\eqref{eq:look_ahead_policy}), and the experiment setting follows Section~\ref{sec:synthetic_experiment} (with a small initial population).
  } 
  \label{fig:example}
\end{figure*}

In response to this gap,
\textbf{we study the dynamics of \textit{``population effects''} on two-sided platforms}. Specifically, we consider viewer and provider participation dynamics which operate as follows: (1) the population of providers increases as their exposure increases, (2) the population of viewers increases as their satisfaction increases, and (3) the potential for viewer satisfaction increases as provider populations increase. 
We assume that these effects follow an arbitrary monotonically increasing function, and the immediate utility (in the form of exposure or satisfaction) is observed. 
The key consequence of setting is that the default approach to recommendation can perform much worse in the long term than even a uniform random policy.
Figure~\ref{fig:example} illustrates this shortcoming of the default approach, a myopic-greedy policy that recommends providers to viewers on the basis of immediate utility.

We examine success and failure cases of the myopic-greedy policy through control- and game-theoretic analyses. Our primary results are the following three points. 
First, by analyzing the convergence conditions of the dynamics, we argue that concentrated exposure allocation among provider groups can easily cause polarization of viewer and provider populations, potentially resulting in a smaller pie (i.e., populations) and long-term social welfare compared to an exposure-distributing policy.
These findings highlight the importance of provider-side awareness such as exposure fairness~\citep{singh2018fairness} for the long-term success of two-sided platforms under population dynamics. 
Second, we analyze the linear case to show that the myopic-greedy policy is guaranteed to be optimal only if the population effects (i.e., utility gain by the population growth) are homogeneous across provider groups. 
Third, we explain the shortcomings of the myopic-greedy policy by decomposing the welfare sub-optimality into two terms:
the \textit{``policy regret''} and \textit{``population regret''}. The former comes from the difference between the policy and the myopic-optimal policy at each timestep given the current population, while the latter comes from the difference between the current population and the population under the optimal policy. By definition, the myopic-greedy policy minimizes only the policy regret (i.e., short-term objective).
Because the myopic-policy ignores the population regret (i.e., long-term impact on the dynamics), the myopic-greedy policy fails when the scale of the long-term utility gain from the population growth is large. 

Finally, we propose a simple algorithm that balances the policy and population regrets by projecting the long term population that will result from the current viewer satisfaction and provider exposure. The proposed \textbf{``look-ahead'' policy optimizes the utility at the projected long term population} instead of the immediate population. The synthetic and real-data experiments using the KuaiRec dataset~\citep{gao2022kuairec} demonstrate that the proposed algorithm works better than both myopic-greedy and uniform random policies in multiple configurations by better trading off the long and short term goals accounting for the population growth.

Our contributions are summarized as follows:
\begin{itemize}
    \item We formulate the ``population effects'' in two-sided platforms where viewer and provider populations evolve.
    \item We find that the myopic-greedy policy can fall short when the population effects are heterogeneous.
    \item We also find that an exposure-guaranteeing policy can be useful for growing populations and minimizing the population regret.
    \item We propose a simple algorithm that considers the long term population and demonstrate its effectiveness in the synthetic and real-data experiments.
\end{itemize}

\section{Viewer-provider two-sided systems}

This section models the dynamics of viewer and provider populations on a recommendation platform. 
Specifically, we consider sub-group dynamics where viewers and providers are categorized into $K$ and $L$ subgroups\footnote{We can consider a ``subgroup'' of size 1. In such cases, the viewer ``population'' corresponds to the time spent by an individual viewer, while the provider ``population'' can be the amount of content produced by an individual provider.
}. Then, we model the populations, recommendation policy, payoffs, and social welfare as follows.

\begin{enumerate}[leftmargin=12pt]
    \item (Viewer/provider population)  
    Let $\lambda_{k} \in \mathbb{R}_{\geq 0}$ be the population of the viewer group $k \in [K]$ and $\lambda_{l}$ be that of the provider group $l \in [L]$. We also let $\blambda := (\lambda_{1}, \lambda_{2}, \cdots, \lambda_{K},
    \lambda_{1}, \lambda_{2}, \cdots, \lambda_{L})$ be the joint population vector of viewers and providers.
    \item (Platform's recommendation policy) 
    The platform matches each viewer group $k$ to a provider group $l$ with a recommendation policy denoted by a $K$-by-$L$ matrix $\bpi$. Specifically, its $(k,l)$-th element $\pi_{k,l}$ represents the probability of allocating the provider group $l$ to the viewer group $k$. 
    Thus $\sum_{l=1}^L \pi_{k,l} = 1, \forall k \in [K]$. For example, the uniform random policy, which assigns equal exposure probability across all provider groups is represented as given by $\bpi=\frac{1}{L}\1_{L\times K}$.
    \item (Viewer/provider payoffs) After viewer and provider groups are matched by the policy $\bpi$, their perceived payoffs can be quantified by the following metrics:
    \begin{align}\label{eq:user_satisfaction}
    \text{Viewer Satisfaction: \quad } & s_k = \textstyle \sum_{l=1}^L \pi_{k,l} q_{k,l} \,  , \\\label{eq:content_exposure}
    \text{Provider Exposure: \quad} & e_l = \textstyle\sum_{k=1}^K \pi_{k,l}\lambda_k,
    \end{align}
    where $q_{k,l}$ is the (expected) utility that viewers $k$ receive from the provider group $l$. Eqs.~\eqref{eq:user_satisfaction} and~\eqref{eq:content_exposure} define viewer satisfaction as determined by the total utility they receive from recommendations, while providers care about the total amount of exposure they receive by recommendation. This model is prevalent is prior works including \citep{singh2018fairness, mladenov2020optimizing}.
    \item (Social welfare) Finally, we consider the following total viewer welfare as the global metric of the platform:
    \begin{align*}
        R(\bpi; \blambda) := \textstyle\sum_{k=1}^{K} \lambda_{k} s_k
    \end{align*}
    Note that here we consider the sum of viewer-side satisfaction as the social welfare, a formulation prevalent in related works~\citep{mladenov2020optimizing, huttenlocher2023matching}.
    The sum of content-side exposure simplifies to the total size of the viewer population.
\end{enumerate}

\subsection{Interaction dynamics and ``population effects''}\label{sec:dynamic_formulation}

We have so far seen a typical formulation in two-sided platforms. However, a key limitation of such formulation is to ignore potential non-stationarity in the viewer and provider populations, which is common in many real-world two-sided systems~\citep{boutilier2023modeling,  deffayet2024sardine}. 

First, consider the impact of provider population growth on the utility experience by viewers, which we call \textit{``population effects''}.
An increase in provider population naturally leads to more high-quality content. 
For example, consider a two-stage recommendation policy, where our higher-level policy $\bpi$ decides the matching between viewer and provider groups, and a second-stage policy selects individual providers among the selected group. 
Any reasonable second stage policy should be able to select a better provider from a larger provider pool~\citep{su2023value, evnine2024achieving}. 
To model such ``population effects'', we introduce the following utility decomposition:
\begin{align}
    q_{k,l} = b_{k,l} + f_{k,l}(\lambda_{l}) \label{eq:reward_decomposition}
\end{align}
where $b_{k,l}$ is the \textit{base} utility, which indicates the matching between the preference of viewer and provider groups (e.g., this viewer group likes sports articles). In contrast, $f_{k,l}(\cdot)$ represents the quality of the provider which improves as the provider population increases. $f_{k,l}$ might be heterogeneous among different viewer and provider groups because quality might be multi-dimensional (e.g., visuals, intellects, novelty), viewers may have different preferences, and providers may have different abilities. 
We take $f_{k,l}$ to be a monotonically increasing function.

Next, consider the impact of viewer and provider payoffs on the population.
The number of viewers that a platform can maintain is related to the level of satisfaction; similarly, the number of providers is related to the exposure.
We assume that viewer and provider subgroups have 
some \textit{``reference''} population $\bar{\lambda}_{k}(s_{k})$ and $\bar{\lambda}_{l}(e_{l})$ given the level of viewer satisfaction $s_k$ and provider exposure $e_l$. We assume that $\bar{\lambda}$ is a monotonically increasing function, so higher viewer satisfaction and provider exposure result in increased populations. 
Based on this, we model the viewer and provider population dynamics as follows:
\begin{align}
    \text{Viewer: \,}  \lambda_{t+1,k} = (1 - \eta_k) \lambda_{t,k} + \eta_k \bar{\lambda}_{k}(s_{t,k}), \label{eq:user_dynamics} \\
    \text{Content: \,}  \lambda_{t+1,l} = (1 - \eta_l) \lambda_{t,l} + \eta_l \bar{\lambda}_{l}(e_{t,l}), \label{eq:content_dynamics}
\end{align}
where $\eta \in [0, 1]$ are the \textit{reactiveness} hyperparams, determining how fast the population changes. Note that similar models are widely adopted in performative predictions~\citep{perdomo2020performative, brown2022performative}. 
We thus have that the viewer satisfaction $s_k$ depends on the provider population via ``population effects'' $f_{k,l}$, while the provider exposure directly depends on the viewer population.
The two-sided platform has complex dynamics between viewers and providers. 
Our goal will be to consider long-term objectives under such co-evolving and two-sided dynamics.

\subsection{Game-theoretic interpretation}\label{sec:game_formulation}

Next, we provide a further justification of and insight into the dynamics model by introducing a game-theoretic formulation that is equivalent to Eqs. \eqref{eq:user_dynamics} and \eqref{eq:content_dynamics}.

Consider a $(K+L)$-player game involving $K$ viewer groups and $L$ provider groups. Each viewer group selects a pure strategy $\lambda_k \in \RR_{\geq 0}$, and each provider group chooses a pure strategy $\lambda_l \in \RR_{\geq 0}$. The utility functions for the viewer and provider groups, denoted by $\{u_k\}_{k=1}^K$ and $\{v_l\}_{l=1}^L$ are defined as follows:
\begin{align}\label{eq:util_user}
    & u_k(\blambda)= \lambda_k \cdot \bar{\lambda}_k \left(\sum_{l=1}^L \pi_{k,l}\left(b_{k,l}+f_{k,l}(\lambda_l)\right)\right)-\frac{\lambda_k^2}{2}, \\ \label{eq:util_creator}
    & v_l(\blambda)= \lambda_l\cdot \bar{\lambda}_l \left(\textstyle\sum_{k=1}^K \pi_{k,l}\lambda_k\right)-\frac{\lambda_l^2}{2},
\end{align}
We denote this game as $\G(\bpi, B, f, \bar{\lambda})$, where $B$ is a $K$-by-$L$ matrix whose $(k,l)$-element is $b_{k,l}$. Observation \ref{prop:dynamics_equivalence} establishes a connection between the game instance $\G$ and the 
dynamical formulation presented in Section \ref{sec:dynamic_formulation}.

\begin{observation}\label{prop:dynamics_equivalence}
    If all players in $\G$ apply gradient ascent to optimize their utility functions with learning rates $\{\eta_k\}_{k=1}^K$ and $\{\eta_l\}_{l=1}^L$, the resulting joint strategy evolving dynamics are exactly given by Eqs.~\eqref{eq:user_dynamics} and \eqref{eq:content_dynamics}.
\end{observation}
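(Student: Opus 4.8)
The plan is to differentiate each player's utility with respect to its own strategy and match the resulting gradient-ascent update term by term with Eqs.~\eqref{eq:user_dynamics} and~\eqref{eq:content_dynamics}. The crucial structural observation is that, from the viewpoint of viewer group $k$, its own strategy $\lambda_k$ enters $u_k$ only through the two explicit occurrences $\lambda_k\cdot(\cdot)$ and $-\lambda_k^2/2$: the argument $s_k=\sum_{l=1}^L\pi_{k,l}(b_{k,l}+f_{k,l}(\lambda_l))$ of $\bar{\lambda}_k$ depends on the provider strategies $\{\lambda_l\}_{l\in[L]}$ but \emph{not} on $\lambda_k$. Hence, for fixed opponents, $u_k$ is just a concave quadratic in $\lambda_k$, and no differentiability of $f_{k,l}$ or $\bar{\lambda}_k$ is needed.

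\textbf{Step 1 (differentiate).} Holding all $\lambda_l$, $l\in[L]$, fixed, $\partial u_k/\partial\lambda_k=\bar{\lambda}_k(s_k)-\lambda_k$, because $\bar{\lambda}_k(s_k)$ is constant in $\lambda_k$. Symmetrically, for provider group $l$, its strategy $\lambda_l$ enters $v_l$ only through $\lambda_l\cdot(\cdot)$ and $-\lambda_l^2/2$, since $e_l=\sum_{k=1}^K\pi_{k,l}\lambda_k$ depends on the viewer strategies $\{\lambda_k\}_{k\in[K]}$ but not on $\lambda_l$; thus $\partial v_l/\partial\lambda_l=\bar{\lambda}_l(e_l)-\lambda_l$.

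\textbf{Step 2 (form the update).} Gradient ascent with step size $\eta_k$ gives $\lambda_{t+1,k}=\lambda_{t,k}+\eta_k(\bar{\lambda}_k(s_{t,k})-\lambda_{t,k})=(1-\eta_k)\lambda_{t,k}+\eta_k\bar{\lambda}_k(s_{t,k})$, which is exactly Eq.~\eqref{eq:user_dynamics}, where $s_{t,k}$ is evaluated at the time-$t$ provider populations under the synchronous-update convention. Likewise $\lambda_{t+1,l}=(1-\eta_l)\lambda_{t,l}+\eta_l\bar{\lambda}_l(e_{t,l})$, which is Eq.~\eqref{eq:content_dynamics}. Stacking over all $K+L$ players recovers the joint dynamics on $\blambda$ verbatim.

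The computation is routine; the only subtlety to flag is the notational overloading of $\lambda$ (the same symbol indexes both viewer and provider strategies), so one must keep straight that $s_k$ is constant in player $k$'s own variable and $e_l$ is constant in player $l$'s own variable — this is precisely what kills the chain-rule term that would otherwise appear when differentiating the product $\lambda_k\bar{\lambda}_k(s_k)$. A secondary point worth one sentence is that the quadratic penalty $-\lambda^2/2$ is exactly what turns the update into the convex-combination ("leaky") form $(1-\eta)\lambda_t+\eta\bar{\lambda}(\cdot)$ rather than an additive increment; identifying that penalty is the reverse-engineering step that motivates the specific shape of the utilities in Eqs.~\eqref{eq:util_user}–\eqref{eq:util_creator}.
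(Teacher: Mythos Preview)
Your argument is correct and is exactly the computation one would expect: the paper states this as an Observation without supplying a proof, treating the claim as immediate once the utilities in Eqs.~\eqref{eq:util_user}--\eqref{eq:util_creator} are written down. Your Step~1/Step~2 verification is the natural (and only reasonable) route, and your remark that $s_k$ is constant in $\lambda_k$ and $e_l$ is constant in $\lambda_l$ is precisely the point that makes the derivatives collapse to $\bar{\lambda}_k(s_k)-\lambda_k$ and $\bar{\lambda}_l(e_l)-\lambda_l$.
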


Observation \ref{prop:dynamics_equivalence} provides a first-principles perspective for understanding the dynamical formulation in Eqs.~\eqref{eq:user_dynamics} and \eqref{eq:content_dynamics}.\footnote{The game $\G$ resembles the Cournot Duopoly competition \cite{cournot1838recherches}. When $K = L = 1$ and $\bar{\lambda}(\mu) = a - b\mu$ and $\bar{\mu}(\lambda) = a - b\lambda$ for some positive constants $a$ and $b$, the game $\G$ corresponds exactly to the Cournot Duopoly model. The key distinction in ours is that $\bar{\mu}$ and $\bar{\lambda}$ are generic increasing functions.} 
That is, 
we can interpret $\bar{\lambda}(\cdot)$ as the marginal gain from increasing the size of a viewer or provider group by one unit. Consequently, the first terms $\lambda_k \cdot \bar{\lambda}_k(\cdot)$ and $\lambda_l \cdot \bar{\lambda}_l(\cdot)$ represent the collective payoffs for viewer and provider groups of sizes $\lambda_k$ and $\lambda_l$. 
The quadratic terms $-\frac{\lambda_k^2}{2}$ and $-\frac{\lambda_l^2}{2}$ capture the congestion costs associated with maintaining larger populations (e.g., if a provider group becomes too large, providers within the group may face intensified competition and thus reduce their productivity due to diminished marginal gains). This suggests that Eqs.~\eqref{eq:user_dynamics} and \eqref{eq:content_dynamics} are quite reasonable formulation to capture real-world interactions.
\section{Stability and sub-optimality}

This section provides theoretical analyses\footnote{All proofs are provided in Appendix~\ref{app:proofs}.} of the stability and sub-optimality under the two-sided dynamics.

\subsection{Stability} 
An important question to ask about the two-sided dynamics is on stability: \textbf{\textit{Under what conditions do the dynamics converge to a fixed point?}} The following Theorem \ref{thrm:condition_eq} provides an affirmative answer, demonstrating that the two-sided dynamics always converge to a stable fixed point, which is also a Nash Equilibrium (NE) \citep{nash1950equilibrium} of the corresponding game instance.\footnote{Definitions~\ref{def:fixed_point}-\ref{def:nash_eq} in the Appendix formally define the concepts of fixed point, stability, and Nash equilibrium.}

\begin{theorem}\label{thm:ne_exist}
    For any continuous functions $f, \bar{\lambda}$ with bounded first-order derivatives, consider the environment defined by the game instance $\G(\bpi, B, f, \bar{\lambda})$. We have:
    \begin{enumerate}[leftmargin=12pt]
        \item The NE of $\G$ always \textbf{exists}, but is not necessarily unique;
        \item The two-sided dynamics (Eqs.~\eqref{eq:user_dynamics} and \eqref{eq:content_dynamics}) always \textbf{converge to} one of $\G$'s NE, provided that $\eta_k, \eta_l$ are smaller than a constant that depends on the game parameters.
    \end{enumerate}
\end{theorem}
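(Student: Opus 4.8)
The plan is to work with the best-response map of $\G$ rather than the game directly. As a function of $\lambda_k$ alone, $u_k(\blambda)$ equals $\lambda_k\,\bar\lambda_k\!\bigl(\sum_l\pi_{k,l}(b_{k,l}+f_{k,l}(\lambda_l))\bigr)-\tfrac12\lambda_k^2$, which is strictly concave with unconstrained maximizer $\bar\lambda_k(s_k)\ge 0$ lying in $\RR_{\ge 0}$; likewise $v_l$ is strictly concave in $\lambda_l$ with maximizer $\bar\lambda_l(e_l)$. Hence a joint strategy is a NE iff it is a fixed point of the continuous map $T:\RR_{\ge0}^{K+L}\to\RR_{\ge0}^{K+L}$ with $T(\blambda)_k=\bar\lambda_k(s_k)$, $T(\blambda)_l=\bar\lambda_l(e_l)$, where $s_k,e_l$ are as in Eqs.~\eqref{eq:user_satisfaction}--\eqref{eq:content_exposure} with $q_{k,l}$ expanded via Eq.~\eqref{eq:reward_decomposition}. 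For part~1 I would (i) argue $T$ maps a compact convex box $C=[0,R_v]^K\times[0,R_p]^L$ into itself --- using that the reference-population and population-effect maps take values in a bounded range, so that $s_k$ (hence $\bar\lambda_k(s_k)$) lies in a bounded interval, whence so does $e_l$ and thus $\bar\lambda_l(e_l)$ --- and apply Brouwer's fixed point theorem; alternatively (ii) observe $T$ is coordinatewise nondecreasing and invoke Tarski's theorem on the complete lattice $C$, which even gives a lattice-structured fixed-point set. Non-uniqueness follows from a one-line example: with $K=L=1$ and bounded sigmoidal $\bar\lambda,f$ the scalar composite best-response map can be made $S$-shaped and cross the diagonal more than once.

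For part~2 I would rewrite Eqs.~\eqref{eq:user_dynamics}--\eqref{eq:content_dynamics} as $\blambda_{t+1}=\Psi(\blambda_t)$ with $\Psi:=(I-\Xi)+\Xi T$, $\Xi=\mathrm{diag}(\eta_1,\dots,\eta_K,\eta_1,\dots,\eta_L)$, whose fixed points are exactly the NE. The structural fact I would lean on is that $DT$ has a bipartite, entrywise-nonnegative off-diagonal pattern: $\partial T_k/\partial\lambda_l=\bar\lambda_k'\pi_{k,l}f_{k,l}'\ge0$, $\partial T_l/\partial\lambda_k=\bar\lambda_l'\pi_{k,l}\ge0$, and $\partial T_k/\partial\lambda_{k'}=\partial T_l/\partial\lambda_{l'}=0$. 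Thus $\Psi$ is order-preserving for reactiveness parameters in $[0,1]$, and the continuous-time analogue $\dot\blambda=T(\blambda)-\blambda$ is a cooperative (monotone) dynamical system; together with forward-invariance of $C$ (bounded orbits), this is the setting where monotone-dynamics theory yields convergence of the trajectory to an equilibrium.

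To pin down the explicit threshold on $\eta$, I would give a contraction estimate. Let $M_{\bar\lambda},M_f$ bound the derivatives of $\bar\lambda,f$ and let $\tilde A$ be the nonnegative matrix with blocks $\bigl(\begin{smallmatrix}0 & M_{\bar\lambda}M_f\Pi\\ M_{\bar\lambda}\Pi^\top & 0\end{smallmatrix}\bigr)$, so that $|\partial T_i/\partial\lambda_j|\le\tilde A_{ij}$ on $C$; a short computation gives $\rho(\tilde A)=M_{\bar\lambda}\sqrt{M_f\,\rho(\Pi\Pi^\top)}=M_{\bar\lambda}\sqrt{M_f}\,\|\Pi\|_2$. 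When $\rho(\tilde A)<1$, taking the Perron weight $w>0$ with $\tilde A w<w$ makes $T$ a contraction in $\|\cdot\|_w$, and $\Psi$ inherits this for all reactiveness parameters below a threshold $\eta^\star$ determined by $w$ and the row sums of $\tilde A$ (i.e.\ by $M_{\bar\lambda},M_f,\Pi$), so Banach's theorem gives global convergence to the then-unique NE. In the borderline regime where $T$ is only nonexpansive, I would instead use a Krasnoselskii--Mann/averaged-map argument (the dynamics is precisely such an average of $I$ and $T$) to still get convergence to one --- not necessarily the same --- of the possibly several NE, matching the ``not necessarily unique'' clause.

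The step I expect to be the real obstacle is the one both parts pass through: producing the compact forward-invariant set $C$ and, more generally, ruling out escape to infinity. Bounded first-order derivatives alone are not enough here --- if $\bar\lambda$ and $f$ were merely Lipschitz and unbounded, the composite best-response map could be a translation $\lambda\mapsto\lambda+c$ with no fixed point at all --- so the argument must exploit that the reference-population and quality maps have bounded range (or, alternatively, a smallness condition such as $M_{\bar\lambda}^2 M_f\|\Pi\|_2^2<1$ on the ``loop gain'' around the viewer$\to$provider$\to$viewer cycle). Once $C$ is secured, the monotonicity of $\Psi$ and this loop-gain bound are what drive convergence, and the game-parameter-dependent constant bounding $\eta_k,\eta_l$ is exactly what keeps $\Psi$ in the contracting/averaged regime.
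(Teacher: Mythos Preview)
Your approach is sound and in several respects more careful than the paper's, but it is genuinely different in both parts.

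\textbf{Existence.} The paper does not use the best-response map directly. Instead it observes that each $u_k,v_l$ is strictly concave in its own coordinate and (since the quadratic penalty eventually dominates) becomes negative for large $\lambda_k,\lambda_l$, so one may restrict to a compact box; then it invokes Rosen's theorem for concave $n$-person games. Your Brouwer/Tarski route via $T$ is equivalent in spirit but more explicit about the fixed-point structure, and the Tarski alternative additionally exploits the monotonicity that the paper never uses. Non-uniqueness is handled the same way in both: a $K=L=1$ sigmoidal example with multiple crossings.

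\textbf{Convergence.} Here the two arguments diverge substantially. The paper's proof goes through the local stability criterion (the Jacobian eigenvalue computation that also underlies Proposition~\ref{coro:sufficient_eq}): it shows that once $\eta\le 4/(KC_1C_2)$ every fixed point is locally stable, and then asserts convergence to such a point, finally noting that strict concavity of $u_k,v_l$ in their own variables forces any stable fixed point to be a NE. This is short and yields an explicit threshold, but as written it establishes only local asymptotic stability, not the global convergence the theorem claims. Your contraction argument in a Perron-weighted norm (when $\rho(\tilde A)<1$) is genuinely global and simultaneously recovers uniqueness in that regime; your monotone/cooperative and Krasnoselskii--Mann fallback is what actually covers the multi-equilibrium case the theorem allows. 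So your route buys a cleaner match to the stated conclusion, at the cost of heavier machinery; the paper's route buys an explicit constant with a lighter touch.

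\textbf{Your flagged obstacle is real.} You are right that bounded first derivatives alone do not produce the compact forward-invariant box: a Lipschitz but unbounded $\bar\lambda$ and $f$ can make $T$ fixed-point-free. The paper's own compactification (``utilities become negative for large $\lambda$'') bounds each coordinate only against the others' values, so a \emph{uniform} bound still implicitly requires bounded reference populations (or your loop-gain smallness $M_{\bar\lambda}^2 M_f\|\Pi\|_2^2<1$). Treat this as an unstated modeling assumption in the paper rather than a flaw in your plan.
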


Theorem \ref{thm:ne_exist} establishes a general stability result for the two-sided dynamics, showing that as long as the reactiveness hyperparams are sufficiently small, it always converges to some fixed point corresponding to an NE of $\G$. This result is surprising in two aspects. First, it stands in stark contrast to prior work on one-sided markets, where NEs can fail to exist~\citep{hron2022modeling,jagadeesan2022supply,yao2023bad}. In comparison, Theorem \ref{thm:ne_exist} demonstrates that when the previously passive resource (e.g., viewer attention) becomes an active participant---as in our two-sided market model---the resulting market dynamics always admit a stable equilibrium. Second, it is well-documented that the existence of NEs does not guarantee convergence to them under gradient-based dynamics, as such dynamics often get stuck in local equilibria~\citep{yao2024user} or even converge to non-Nash stationary points~\citep{mazumdar1901finding}. In contrast, we establish that gradient-based dynamics---specifically in our setting, the two-sided dynamics---provably converge to an NE under mild conditions. This highlights a nice structural property of the two-sided market.

Moreover, our result accommodates a wide range of modeling choices, including reference functions $\bar{\lambda}$, population effects $f$, and recommendation policies $\bpi$, without requiring restrictive assumptions. The following sufficient condition indicates an interesting relationship between the policy design and the stability of fixed points.

\begin{proposition}[Sufficient condition for stability] \label{coro:sufficient_eq}
Suppose that the first-order derivative of dynamics functions are bounded as $(\nabla_{e_l} \bar{\lambda}_l) (\nabla_{\lambda_l} f_l) \leq C_1, \forall l \in [L]$ and $(\nabla_{s_k} \bar{\lambda}_k) \leq C_2, \forall k \in [K]$ at some fixed point $\blambda_{eq}$. Also, suppose $\eta_k \leq \eta, \forall k \in [K]$. Then, $\blambda_{eq}$ is stable when 
\begin{align}
    \textstyle\sum_{k=1}^{K} \pi_{l,k} \leq \frac{4 \eta^{-1}}{C_1 C_2}. \label{eq:stable_saficient}
\end{align}
\end{proposition}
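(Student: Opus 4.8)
The plan is to prove local asymptotic stability of $\blambda_{eq}$ by linearizing the update map and controlling the spectral radius of its Jacobian with the stated derivative bounds and the column sums of $\bpi$. Write the joint dynamics as $\blambda_{t+1}=G(\blambda_t)$ where $G$ is defined by Eqs.~\eqref{eq:user_dynamics}--\eqref{eq:content_dynamics} together with $s_k=\sum_l\pi_{k,l}(b_{k,l}+f_{k,l}(\lambda_l))$ and $e_l=\sum_k\pi_{k,l}\lambda_k$. By the standard linearization criterion it suffices to show $\rho(J)<1$ for $J=\nabla G(\blambda_{eq})$. Since $s_k$ depends only on provider populations and $e_l$ only on viewer populations, $J$ has the block form $J=\left(\begin{smallmatrix}\mathrm{diag}(1-\eta_k) & A\\ D & \mathrm{diag}(1-\eta_l)\end{smallmatrix}\right)$, with $A_{kl}=\eta_k(\nabla_{s_k}\bar\lambda_k)\pi_{k,l}(\nabla_{\lambda_l}f_{k,l})$ and $D_{lk}=\eta_l(\nabla_{e_l}\bar\lambda_l)\pi_{k,l}$, all evaluated at $\blambda_{eq}$. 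Because $\bar\lambda$ and $f$ are monotonically increasing, $A,D\ge 0$ entrywise, so $J$ is a nonnegative matrix and (Perron--Frobenius) $\rho(J)$ is realized by a real nonnegative eigenvalue; moreover every eigenvalue has modulus at most the Perron root of the relevant block product.

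Next I would reduce the eigenvalue problem to the two-sided ``feedback loop''. For an eigenpair $(\mu,(x,y))$ with $\mu\notin\{1-\eta_k\}\cup\{1-\eta_l\}$, the block equations give $\mathrm{diag}(\mu-1+\eta_k)x=Ay$ and $\mathrm{diag}(\mu-1+\eta_l)y=Dx$; eliminating $y$ shows that $(\mu-1+\eta)^2$ is an eigenvalue of $AD$ in the homogeneous case $\eta_k\equiv\eta_l\equiv\eta$ (and of a $\mathrm{diag}(\eta)$-reweighted version otherwise). The remaining eigenvalues are exactly the $1-\eta_k$ and $1-\eta_l$, which lie in $(-1,1)$ since each $\eta\in(0,1]$. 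Writing $\sigma$ for the Perron eigenvalue of $AD$ (real, $\ge 0$), the coupled eigenvalues satisfy $|\mu|=|1-\eta\pm\sqrt{\sigma}|\le(1-\eta)+\sqrt{\rho(AD)}$, so $\rho(J)<1$ follows once $\rho(AD)$ is smaller than a threshold of order $\eta^2$ (the precise threshold comes from carrying both branches $1-\eta\pm\sqrt\sigma$).

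It then remains to bound $\rho(AD)$ using the hypotheses. The key point is that in $(AD)_{kk'}=\sum_l A_{kl}D_{lk'}$ the factors $(\nabla_{\lambda_l}f_{k,l})$ and $(\nabla_{e_l}\bar\lambda_l)$ appear at the \emph{same} provider index $l$, so their product is $\le C_1$; together with $(\nabla_{s_k}\bar\lambda_k)\le C_2$ and $\eta_k,\eta_l\le\eta$ this gives $(AD)_{kk'}\le \eta^2 C_1 C_2\sum_l\pi_{k,l}\pi_{k',l}$. Hence
\[
\rho(AD)\le\|AD\|_\infty\le \eta^2 C_1 C_2\,\max_k\sum_l\pi_{k,l}\Big(\sum_{k'}\pi_{k',l}\Big)\le \eta^2 C_1 C_2\,\max_l\sum_k\pi_{k,l},
\]
where the last step uses $\sum_l\pi_{k,l}=1$. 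Feeding this into the threshold from the previous step turns $\rho(J)<1$ into an explicit condition on $C_1 C_2\max_l\sum_k\pi_{k,l}$, i.e.\ the bound~\eqref{eq:stable_saficient}; tracking constants through the branch analysis (and the heterogeneous-$\eta$ correction) fixes the numerical factor.

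The main obstacle is the heterogeneous-$\eta$ case, where the elimination step yields the nonlinear eigenvalue equation $D\,\mathrm{diag}(\mu-1+\eta_k)^{-1}A\,y=\mathrm{diag}(\mu-1+\eta_l)y$ rather than a clean eigenvalue of $AD$. I would handle it by writing $J=I+\mathrm{diag}(\eta)\,M_0$ with $0\preceq\mathrm{diag}(\eta)\preceq\eta I$ and $M_0$ independent of the reactiveness parameters, and then using monotonicity of the Perron root of a nonnegative matrix in its entries to reduce to the uniform upper value $\eta$. Verifying that this monotone comparison is valid here (the $(1-\eta_k)$ terms enter with the ``wrong'' sign, so one must argue on $M_0$ or on a suitable resolvent rather than on $J$ directly), and confirming which of the two branches $1-\eta\pm\sqrt\sigma$ is binding over the range of $\eta$, is the delicate part; everything else is bookkeeping.
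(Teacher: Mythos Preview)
Your route is sound but genuinely different from the paper's. The paper also linearizes and writes the Jacobian in the same $2\times2$ block form, but instead of your elimination-and-$\rho(AD)$ argument it invokes the Schur complement identity $\det(\nabla_\lambda S)=\det(A_{1,1})\det(A_{2,2}-A_{2,1}A_{1,1}^{-1}A_{1,2})$ and then asserts that the eigenvalues of $\nabla_\lambda S$ are the union of those of $A_{1,1}$ (namely $1-\eta_k$) and a per-$l$ list for the Schur complement, recorded as $\{\mu_2\}_l=\eta_l(1-\eta_l)(\nabla_{e_l}\bar\lambda_l)(\nabla_{\lambda_l}f_l)\sum_k\eta_k(\nabla_{s_k}\bar\lambda_k)\pi_{k,l}$. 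The constant $4\eta^{-1}$ then drops out in one line from $\eta_l(1-\eta_l)\le\tfrac14$ together with $\eta_k\le\eta$: demanding $\{\mu_2\}_l<1$ gives $\tfrac14\,C_1\,\eta\,C_2\sum_k\pi_{k,l}<1$. In particular the heterogeneous-$\eta$ issue you flag as delicate never arises, because the paper writes a separate scalar condition for each $l$ rather than reducing to a single spectral radius.

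What each approach buys: the paper's argument is essentially a two-line computation once the eigenvalue list is granted, and it produces the stated constant directly; your approach is more careful about the spectral structure (the Schur determinant identity does not by itself split the spectrum, and the Schur complement here is not diagonal, so the paper's per-$l$ formula is asserted rather than derived). On the other hand, in the uniform-$\eta$ case your threshold $\rho(AD)<\eta^2$ combined with your row-sum bound yields $\sum_k\pi_{k,l}<1/(C_1C_2)$---a \emph{tighter} sufficient condition than~\eqref{eq:stable_saficient}, but not the same constant. So if your goal is to reproduce~\eqref{eq:stable_saficient} exactly, you would need to import the $\eta_l(1-\eta_l)\le\tfrac14$ step, which does not arise organically along your path; if you are content with \emph{a} correct sufficient condition of the same form, your argument already provides one, and the heterogeneous-$\eta$ reduction you sketch is the only remaining bookkeeping.
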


Consider an \textit{exposure-fair} policy, which distributes the exposure equally among provider subgroups. 
When $f$ and $\bar{\lambda}$ are monotonically increasing concave functions, Proposition~\ref{coro:sufficient_eq} suggests that such an exposure-fair policy guarantees a balanced equilibrium, where both viewer and provider subgroups maintain a moderate population and payoffs without polarization. 
This is due to the following reasons. First, the upper bound (i.e., RHS of the inequality) becomes more restrictive when the first order derivative of the dynamics (i.e., $C_1$ and $C_2$) is large, which is true when viewer satisfaction ($s_k$), provider exposure ($e_l$), and provider population ($\lambda_l$) are small. 
Then, an exposure-fair policy can exclude such equilibria due to the violation of Ineq.~\eqref{eq:stable_saficient}. In contrast, using an \textit{exposure-concentrated policy}, which does not distribute exposure to some provider subgroups, can lead to a polarized equilibrium with winners and losers, as such equilibria are not excluded by Ineq.~\eqref{eq:stable_saficient}.

Consequently, the reduced subgroup population may negatively impact the long-term viewer satisfaction, as we have seen in Figure~\ref{fig:example}. We formally discuss such impacts through the regret analysis in the next subsection.

\subsection{Sub-optimality} 
Our next question is: \textbf{\textit{How does the ``population effect'' affect the policy design when the dynamics converge?}}
To answer the question, we introduce the following notion of sub-optimality, called \textit{regret}, to measure the performance difference between the optimal (static) policy\footnote{{$\bpi^{\ast} := {\arg\max}_{\bpi \in \Pi} \sum_{t=1}^T R(\bpi; \blambda_t)$}}  $\bpi^{\ast}$ and a given (possibly time-varying) policy $\bpi_t$:
\begin{align*}
    \text{Regret}(\bpi)
    &= \frac{1}{T} \textstyle\sum_{t=1}^T \left( R(\bpi^{\ast}; \blambda_t^{\ast}) - R(\bpi_t; \blambda_t) \right)
\end{align*}
where 
$\blambda_t^{\ast}$ is the population at timestep $t$ under the policy $\bpi^{\ast}$ and $\blambda_t$ is that of $\bpi$. $T$ is the total horizon of the timesteps. 
Assuming that the policy $\bpi_t$ converges to within $\delta$ of a static policy $\bpi$, the above regret can be decomposed into two factors as shown in the following Proposition~\ref{prop:regret}.

\begin{theorem}[Regret decomposition] \label{prop:regret}
The (total) regret is decomposed into two main factors:
\begin{align*}
    \mathrm{Regret}(\bpi)
    &= \underbrace{\frac{1}{T} \sum_{t=1}^T \Delta R (\blambda_t^{\ast},  \blambda_t^{\pi})}_{(1)} + \underbrace{\frac{1}{T} \sum_{t=1}^T \Delta R (\bpi_t^{1}, \bpi_t)}_{(2)} \\
    & \quad \quad + \mathcal{O} \left(\delta / T \right) + \text{const.}
\end{align*}
We call (1) as ``population regret'' and (2) as ``policy regret''. Each component is defined as follows.
\begin{align*}
    \Delta R (\blambda_t^{\ast},  \blambda_t) &:= R(\bpi_t^{1,\ast}; \blambda_t^{\ast}) - R(\bpi_t^{1}; \blambda_t) \\ 
    \Delta R (\bpi_t^{1}, \bpi_t) &:= R(\bpi_t^{1}; \blambda_t) - R(\bpi_t; \blambda_t),
\end{align*}
where $\bpi_t^{1}$ is the one-step myopic-greedy policy at timestep $t$ given population $\blambda_t$, and $\bpi_t^{1,\ast}$ is that of under $\blambda_t^{\ast}$. 

\end{theorem}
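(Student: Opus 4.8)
The decomposition comes from one exact add-and-subtract identity per timestep, after which the only real work is to control the leftover. Fix $t$ and insert the two benchmark quantities $R(\bpi_t^{1,\ast};\blambda_t^{\ast})$ and $R(\bpi_t^{1};\blambda_t)$ into the per-step regret:
\begin{align*}
    R(\bpi^{\ast};\blambda_t^{\ast}) - R(\bpi_t;\blambda_t)
    &= \big[R(\bpi^{\ast};\blambda_t^{\ast}) - R(\bpi_t^{1,\ast};\blambda_t^{\ast})\big] \\
    &\quad + \Delta R(\blambda_t^{\ast},\blambda_t) + \Delta R(\bpi_t^{1},\bpi_t).
\end{align*}
Because $R(\bpi_t^{1};\blambda_t)$ appears in $\Delta R(\blambda_t^{\ast},\blambda_t)$ with a minus sign and in $\Delta R(\bpi_t^{1},\bpi_t)$ with a plus sign, this identity is exact. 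Averaging over $t$ yields $\mathrm{Regret}(\bpi) = (1) + (2) + \tfrac1T\sum_{t=1}^T A_t$ with $A_t := R(\bpi^{\ast};\blambda_t^{\ast}) - \max_{\bpi'} R(\bpi';\blambda_t^{\ast}) \le 0$, where I used that $\bpi_t^{1,\ast}$ is by definition a maximizer of $R(\cdot\,;\blambda_t^{\ast})$. The key observation is that $A_t$ depends only on the fixed policy $\bpi^{\ast}$ and the population trajectory it induces, not on the evaluated policy $\bpi$.

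Next I would show $\tfrac1T\sum_t A_t = \mathrm{const.} + \mathcal{O}(1/T)$. By \Thmref{thm:ne_exist} the $\bpi^{\ast}$-dynamics converge to a fixed point $\blambda_\infty^{\ast}$, and revisiting that proof the convergence is geometric since the Jacobian of the joint update map is a contraction for small $\eta$. Now $R(\cdot\,;\blambda)$ is linear in $\bpi$ with coefficients $q_{k,l} = b_{k,l} + f_{k,l}(\lambda_l)$ that are Lipschitz in $\blambda$ (bounded derivatives of $f$), so $\blambda \mapsto \max_{\bpi'} R(\bpi';\blambda)$ is Lipschitz as a finite maximum of Lipschitz functions — this is precisely where one needs only the \emph{value} of the myopic-greedy maximizer, not its uniqueness. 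Hence $A_t \to A_\infty := R(\bpi^{\ast};\blambda_\infty^{\ast}) - \max_{\bpi'} R(\bpi';\blambda_\infty^{\ast})$ geometrically, so $\tfrac1T\sum_t (A_t - A_\infty) = \mathcal{O}(1/T)$, and we set $\mathrm{const.} = A_\infty$.

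The remaining piece is the $\mathcal{O}(\delta/T)$ term: the clean telescoping produces the population trajectory actually generated by the iterates $\bpi_t$, whereas $(1)$ and $(2)$ are stated with $\blambda_t$ (the $\bpi$-trajectory) and $\bpi_t$. Using that $\bpi_t$ is eventually within $\delta$ of $\bpi$, a shadowing/perturbation argument — again leveraging the contraction of the dynamics map from \Thmref{thm:ne_exist} — bounds the aggregate discrepancy $\sum_t \lVert \blambda_t^{\bpi_t} - \blambda_t \rVert$ by $\mathcal{O}(\delta)$, and Lipschitzness of $R$ and of the myopic value converts this into an $\mathcal{O}(\delta/T)$ correction after averaging. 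Combining the three pieces gives the stated identity.

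\textbf{Main obstacle.} The algebra is routine; the substance is (i) extracting a uniform geometric rate from \Thmref{thm:ne_exist}, whose statement only asserts convergence, so I would need its proof to confirm the contraction constant holds along the whole trajectory, and (ii) making the $\delta$-dependence rigorous — showing that a policy that is eventually $\delta$-close to a static $\bpi$ produces a trajectory whose cumulative deviation from the $\bpi$-trajectory is $\mathcal{O}(\delta)$ rather than $\mathcal{O}(\delta T)$, which requires the perturbed dynamics to inherit the contraction (plausible from the bounded-derivative assumptions on $f$ and $\bar\lambda$, but needs care).
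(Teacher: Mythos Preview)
Your proposal is correct and follows the paper's approach: the same add-and-subtract telescoping at each $t$, the same three pieces (population regret, policy regret, and a $\bpi^{\ast}$-only leftover), and the same $\blambda_t^{\pi}$ versus $\blambda_t$ discrepancy absorbed into the $\mathcal{O}(\delta/T)$ correction.

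Two minor points of comparison. First, you over-engineer the ``const.''\ term: in the paper this label simply means the sequence $A_t = R(\bpi^{\ast};\blambda_t^{\ast}) - R(\bpi_t^{1,\ast};\blambda_t^{\ast})$ is independent of the evaluated policy $\bpi$, not that it is independent of $t$ --- so your geometric-convergence argument $A_t\to A_\infty$ is not required (though it is a valid strengthening). Second, your obstacle~(ii) is exactly the substantive step the paper glosses over: it writes $\Delta R(\blambda_t^{\ast},\blambda_t) = \Delta R(\blambda_t^{\ast},\blambda_t^{\pi}) + \Delta R(\blambda_t^{\pi},\blambda_t)$ and then asserts $\tfrac1T\sum_t \Delta R(\blambda_t^{\pi},\blambda_t)=\mathcal{O}(\delta/T)$ directly from the assumption that $\bpi_t$ is eventually $\delta$-close to $\bpi$, without spelling out the contraction/shadowing argument you outline. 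Your identification of that as the real work is accurate.
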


The population regret refers to the sub-optimality caused by the difference of population ($\blambda_t$ and $\blambda_t^{\ast}$) at timestep $t$, while the policy regret refers to the sub-optimality caused by the difference of the policy ($\bpi_t$ and $\bpi_t^{1}$). This suggests that the myopic policy makes the policy regret small, but completely ignores the population regret. 
This presents the reason why we have observed that the uniform random policy outperformed the myopic policy in the toy example presented in Figure~\ref{fig:example} in the Introduction. 
\section{When is ``myopic-greedy'' optimal?}

We have seen that the myopic-greedy policy is not always optimal. Then, the next question will be as follows: \textbf{\textit{When does the myopic-greedy policy succeed?}} This section answers the question with a game-theoretic analysis in the case that $f$ and $\bar{\lambda}$ are linear functions.

Our main finding is that a myopic-greedy policy is nearly optimal when the provider population effects $f$ are homogeneous across provider groups. To formalize this, we define the family of $\epsilon$-greedy policies as follows:
\begin{align*}
    \pi_{k,l}^{(\epsilon)} = (1 - \epsilon) \, \mathbb{I} \{ l = {\arg\max}_{l' \in [L]} b_{k,l'} \} + 
    \epsilon / L,
\end{align*}
where $\mathbb{I}\{\cdot\}$ is the indicator function and $\epsilon \in [0, 1]$ is the degree of exploration (i.e., random choice).
Notably, $\bpi^{(0)}$ corresponds to the myopic-greedy policy, 
while $\bpi^{(1)}$ is the uniform random.
The subsequent results establish that $\bpi^{(0)}$ 
optimal
in the homogeneous-linear setting.

\begin{theorem}[Optimality of the myopic-greedy]\label{thrm:optimal_greedy}
Let $\blambda_{\infty}$ be the population at the NE under policy $\bpi$. For any base utility $B$ and linear increasing and homogeneous functions $\bar{\lambda}$ and $f$, the social welfare $R(\bpi^{(\epsilon)}; \blambda_{\infty}^{(\epsilon)})$ 
under the $\epsilon$-greedy policy 
is decreasing in $\epsilon\in[0, 1]$. In particular, we have
\begin{enumerate}
    \item When $K=1$, $R(\bpi^{(\epsilon)}; \blambda_{\infty}^{(\epsilon)})$ is strictly decreasing in $\epsilon$.
    \item When $K>1$, we can identify functions $g,h$ such that 
    \begin{equation}\label{eq:bound_myopic}
        g(\epsilon) \leq R(\bpi^{(\epsilon)}; \blambda^{(\epsilon)}_{\infty})\leq g(\epsilon)h(\epsilon),
    \end{equation}
    and both $g,h$ are decreasing in $\epsilon$. In addition, when $(\nabla_{\lambda_l} f_{k,l})(\nabla_{e_l} \bar{\lambda}_l) (\nabla_{s_k} \bar{\lambda}_k)$ is sufficiently small, the function $h(\epsilon)\rightarrow 1$ and Eq. \eqref{eq:bound_myopic} is tight.
\end{enumerate}
\end{theorem}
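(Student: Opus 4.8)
I would first unfold the linear--homogeneous hypothesis, writing $f_{k,l}(x)=\phi x$, $\bar\lambda_k(s)=\beta s$, $\bar\lambda_l(e)=\gamma e$ with positive slopes (taking the reference maps through the origin, as homogeneity suggests), and set $Q:=\beta\gamma\phi$, which is exactly the product of derivatives appearing in Eq.~\eqref{eq:bound_myopic}. Fixing $\bpi=\bpi^{(\epsilon)}$, I would characterize its NE: the fixed-point equations give $\lambda_l=\gamma\sum_k\pi_{k,l}\lambda_k$ and $\lambda_k=\beta(\hat b_k+\phi\sum_l\pi_{k,l}\lambda_l)$ with $\hat b_k:=\sum_l\pi_{k,l}b_{k,l}$, and eliminating the provider variables yields the linear system $\mx=\beta\hat{\mathbf b}+QP\mx$ for the viewer-population vector $\mx=(\lambda_1,\dots,\lambda_K)$, where $P:=\bpi\bpi^\top\succeq 0$. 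Within the stability regime of Theorem~\ref{thm:ne_exist} (which in the linear case amounts to $Q\lambda_{\max}(P^{(\epsilon)})<1$, holding once $Q$ is small since $\lambda_{\max}(P^{(\epsilon)})\le K$), the NE is the unique solution $\mx=(I-QP^{(\epsilon)})^{-1}\beta\hat{\mathbf b}$, and since $\lambda_k=\beta s_k$ at the fixed point the welfare reduces to $R(\bpi^{(\epsilon)};\blambda_\infty^{(\epsilon)})=\sum_k\lambda_k s_k=\tfrac1\beta\|\mx\|_2^2$. The key structural gain from homogeneity is that the effective operator $QP^{(\epsilon)}$ is \emph{symmetric PSD} and the welfare is a \emph{scaled squared norm}; a nonhomogeneous slope would make the operator nonsymmetric and the welfare a weighted norm, breaking the bounds below.

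\textbf{Two elementary monotonicities, and the case $K=1$.} Next I would record that, as functions of $\epsilon\in[0,1]$: (i) $\hat b_k(\epsilon)=(1-\epsilon)\max_l b_{k,l}+\tfrac{\epsilon}{L}\sum_l b_{k,l}$ is decreasing, since $\max_l b_{k,l}\ge\tfrac1L\sum_l b_{k,l}$; and (ii) $\operatorname{tr}(P^{(\epsilon)})=\|\bpi^{(\epsilon)}\|_F^2=\sum_{k,l}(\pi_{k,l}^{(\epsilon)})^2=K\big(1-\tfrac{L-1}{L}\epsilon(2-\epsilon)\big)$ is decreasing, by a one-line computation (every row contributes the same $1-\tfrac{L-1}{L}\epsilon(2-\epsilon)$). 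I would also assume the natural condition $B\ge0$, so that $\hat{\mathbf b}\ge0$ and populations are nonnegative. For $K=1$, $P^{(\epsilon)}$ is the scalar $\operatorname{tr}(P^{(\epsilon)})$, so $\lambda_\infty=\beta\hat b_1(\epsilon)/(1-Q\operatorname{tr}(P^{(\epsilon)}))$ is a nonnegative decreasing numerator over an increasing positive denominator, hence strictly decreasing whenever $b_{1,\cdot}$ is non-constant; then $R=\lambda_\infty^2/\beta$ is strictly decreasing (if $b_{1,\cdot}$ is constant, all $\epsilon$-greedy policies are welfare-equivalent).

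\textbf{Case $K>1$.} For the lower bound I would expand the Neumann series $\mx=\sum_{j\ge0}Q^j(P^{(\epsilon)})^j\beta\hat{\mathbf b}$: all terms are entrywise nonnegative and the $j=0$ term is $\beta\hat{\mathbf b}$, so $\mx\ge\beta\hat{\mathbf b}\ge0$ entrywise and
\[
R(\bpi^{(\epsilon)};\blambda_\infty^{(\epsilon)})=\tfrac1\beta\|\mx\|_2^2\ \ge\ \beta\|\hat{\mathbf b}(\epsilon)\|_2^2\ =:\ g(\epsilon),
\]
with $g$ decreasing by (i); note $g(\epsilon)=\lim_{Q\to0}R(\bpi^{(\epsilon)};\blambda_\infty^{(\epsilon)})$, the natural baseline. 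For the upper bound, $R=\tfrac1\beta\|(I-QP^{(\epsilon)})^{-1}\beta\hat{\mathbf b}\|_2^2\le g(\epsilon)\,\|(I-QP^{(\epsilon)})^{-1}\|_2^2$, and since $P^{(\epsilon)}$ is symmetric PSD, $\|(I-QP^{(\epsilon)})^{-1}\|_2=(1-Q\lambda_{\max}(P^{(\epsilon)}))^{-1}\le(1-Q\operatorname{tr}(P^{(\epsilon)}))^{-1}$ (using $\lambda_{\max}(P)\le\operatorname{tr}(P)$; valid once $Q\operatorname{tr}(P^{(\epsilon)})<1$). Hence $R\le g(\epsilon)h(\epsilon)$ with
\[
h(\epsilon):=\big(1-Q\operatorname{tr}(P^{(\epsilon)})\big)^{-2}=\Big(1-QK\big(1-\tfrac{L-1}{L}\epsilon(2-\epsilon)\big)\Big)^{-2},
\]
which is decreasing in $\epsilon$ by (ii), with $h(\epsilon)\to1$ as $Q=(\nabla_{\lambda_l}f_{k,l})(\nabla_{e_l}\bar\lambda_l)(\nabla_{s_k}\bar\lambda_k)\to0$, in which case $g(\epsilon)\le R\le g(\epsilon)$ forces $R=g(\epsilon)$ and the bound is tight.

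\textbf{Where the difficulty lies.} The above steps are short once the reduction is in place; the two points I expect to need care are: (a) one should \emph{not} attempt to prove $R(\bpi^{(\epsilon)};\blambda_\infty^{(\epsilon)})$ is globally monotone for $K>1$ --- it need not be, since $(I-QP^{(\epsilon)})^{-1}$ can amplify the off-$\1$ component of $\hat{\mathbf b}(\epsilon)$ non-monotonically --- so the two-sided bound with $g=\lim_{Q\to0}R$ is the correct formulation; and (b) allowing a genuine additive offset $\rho$ in $\bar\lambda_k$ turns $R$ into $\tfrac1\beta(\|\mx\|_2^2-\rho\|\mx\|_1)$, whose cross term spoils the clean multiplicative sandwich, so the homogeneity hypothesis is doing real work here (not merely tidying notation): handling that case requires either the through-origin convention adopted above or a separate, more delicate control of $\|\mx\|_1$.
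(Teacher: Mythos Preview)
Your reduction is the same as the paper's: the NE viewer-population vector solves $(I-QP)\mx=\beta\hat{\mathbf b}$ with $P=\bpi\bpi^\top$ symmetric PSD, and the welfare collapses to $\|\mx\|_2^2/\beta$, whence the sandwich $g(\epsilon)\le R\le g(\epsilon)h(\epsilon)$ with $g(\epsilon)=\beta\|\hat{\mathbf b}(\epsilon)\|_2^2$. The tactical differences are minor but worth noting. (i) The paper retains the provider-side intercept $b_2$, which just adds $a_0b_2\1_K$ to $\hat{\mathbf b}$ and does not affect monotonicity. (ii) For the lower bound the paper uses $\sigma_{\max}(I-QP^{(\epsilon)})\le 1$ directly rather than your entrywise Neumann-series argument; this avoids your extra hypothesis $B\ge 0$, which is not in the theorem statement. (iii) For the upper bound the paper computes $\lambda_{\max}(\bpi_0\bpi_0^\top)$ explicitly from the block structure of the greedy policy (largest viewer-cluster size $K_1$) and then controls the $\epsilon$-perturbation by Weyl's inequality, obtaining the sharper $h(\epsilon)=\big(1-QK_1+Q\epsilon(2-\epsilon)(K_1-K/L)\big)^{-2}$; your trace bound $\lambda_{\max}(P)\le\operatorname{tr}(P)$ is more elementary but looser at $\epsilon=0$ (giving $(1-QK)^{-2}$ instead of $(1-QK_1)^{-2}$) and requires the stronger condition $QK<1$. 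Both $h$'s are decreasing and tend to $1$ as $Q\to 0$, so either route recovers the theorem; yours trades tightness for avoiding the block-eigenvalue computation and Weyl's inequality.
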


 Theorem~\ref{thrm:optimal_greedy} suggests that when the population effect $f$ is linear and homogeneous across different provider groups, the myopic-greedy policy will be always optimal. This also holds in the case when there is no population effect, i.e., $f_{k,l}(\lambda_l)$ always equals to a constant, $\forall \lambda_l \in \mathbb{R}, \forall (k,l) \in [K] \times [L]$. In such cases, the use of the myopic-greedy policy is recommended.

However, when the population effect becomes heterogeneous across different provider groups, the myopic policy ceases to be optimal, as illustrated by Proposition \ref{prop:heterogeneous_f}.

\begin{proposition}\label{prop:heterogeneous_f}
The myopic-greedy policy can be sub-optimal when $\{f_{k,l}\}$ are heterogeneous across provider groups, even when $\bar{\lambda}$ and $f$ remain linear.
\end{proposition}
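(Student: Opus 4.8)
The plan is to establish the proposition by constructing an explicit instance with linear but heterogeneous population effects on which the myopic-greedy policy provably incurs strictly positive regret. I would use the smallest nontrivial configuration: $K=1$ viewer group and $L=2$ provider groups, linear reference functions $\bar{\lambda}(s)=s$ and $\bar{\lambda}_l(e_l)=e_l$, and linear heterogeneous population effects $f_l(\lambda_l)=c_l\lambda_l$ with $c_1\neq c_2$ (concretely, $c_1$ small and $c_2$ close to $1$). I would pick base utilities $b_1>b_2>0$ so that the greedy provider $\arg\max_l b_{1,l}=1$ is unambiguous; this makes $\bpi^{(0)}$ in the $\epsilon$-greedy family coincide with ``recommend group $1$ with probability one'', and by construction it also coincides with the immediate-utility-greedy choice along the relevant dynamics.

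First I would solve the fixed-point (equivalently NE, by Theorem~\ref{thm:ne_exist}) equations in closed form. For a static policy $\bpi=(p,1-p)$, linearity gives $\lambda_l=p_l\lambda$ and $\lambda=s=\sum_l p_l\bigl(b_l+c_l\lambda_l\bigr)$, so that
\[
\lambda_{\infty}(p)=\frac{p\,b_1+(1-p)\,b_2}{1-c_1p^2-c_2(1-p)^2},\qquad R(\bpi;\blambda_{\infty}(p))=\lambda_{\infty}(p)^2,
\]
which is valid, and the fixed point is stable, whenever the denominator is positive and $\eta$ is small enough (Theorem~\ref{thm:ne_exist} / Proposition~\ref{coro:sufficient_eq}). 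Next I would verify that the myopic-greedy policy actually converges to the concentrated fixed point $p=1$: there $\lambda_2=0$, hence $q_2=b_2<b_1\leq q_1$, so recommending group $1$ is self-consistent under the dynamics and $\bpi^{(0)}$ settles at welfare $R(1)=(b_1/(1-c_1))^2$.

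Finally I would exhibit the separation numerically: with $b_1=1$, $b_2=0.9$, $c_1=0.1$, $c_2=0.9$ one gets $R(1)=(1/0.9)^2$ while the balanced static policy gives $R(1/2)=(0.95/0.75)^2>R(1)$, and both denominators are positive. Since $\bpi^{\ast}$ is by definition at least as good as the best static policy on its own trajectory, its converged welfare is $\geq R(1/2)$; invoking the convergence-to-NE guarantee of Theorem~\ref{thm:ne_exist} for both $\bpi^{\ast}$ and $\bpi^{(0)}$, the time-averaged $\mathrm{Regret}(\bpi^{(0)})$ tends to a value $\geq R(1/2)-R(1)>0$, hence is strictly positive for all sufficiently large $T$. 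This proves that the myopic-greedy policy can be sub-optimal with heterogeneous linear $f$.

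The main obstacle is not the algebra but the two consistency checks underpinning the construction: (i) confirming that the myopic-greedy trajectory from a natural initial population actually reaches the $p=1$ fixed point (so that its greedy choice is stable, not merely self-consistent at equilibrium), and (ii) ensuring the positivity conditions $1-c_1p^2-c_2(1-p)^2>0$ at both $p=1$ and $p=1/2$, together with the smallness of $\eta$ required by Theorem~\ref{thm:ne_exist}, hold simultaneously with the strict welfare gap. Keeping every parameter explicit and taking $\eta$ small resolves both; a minor additional point is to make the example robust to whether ``myopic-greedy'' is interpreted as greedy in $q_{k,l}$ or in $b_{k,l}$, which the combination $b_1>b_2$ and $\lambda_2\to 0$ guarantees.
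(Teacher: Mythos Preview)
Your proposal is correct and follows essentially the same approach as the paper: both construct a $K=1$, $L=2$ instance with the same base utilities $b_1=1$, $b_2=0.9$, linear reference and population-effect functions, derive the closed-form equilibrium welfare $R(\bpi;\blambda_{\infty})\propto\lambda_{\infty}(p)^2$ with $\lambda_{\infty}(p)=\dfrac{p\,b_1+(1-p)\,b_2}{1-c_1p^2-c_2(1-p)^2}$, and then check numerically that the greedy choice $p=1$ does not maximize it. The paper's parameter choice corresponds to $c_1=0$ and $c_2=0.4$ (after normalization), whereas you take $c_1=0.1$, $c_2=0.9$; both work, and your extra discussion of convergence and the regret framing goes slightly beyond what the paper does but is not needed for the proposition as stated.
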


We provide a detailed example in Appendix \ref{proof:heterogeneous_f} to support Proposition \ref{prop:heterogeneous_f}. Intuitively, the heterogeneity matters because it results in \textit{cross-over} behaviors (e.g., provider group A starts low utility but becomes high utility, while provider group B has medium utility regardless of the population), which matter for policy design. Aside from the linear case, \textit{saturation} behaviors (e.g., no further increases in utility once the population becomes adequately large) also matter. When we encounter such heterogeneous or concave population effects among multiple provider groups, the myopic-greedy policy may not be optimal, as it ignores the impact of policy to future population changes.
These results demonstrate that the myopic-greedy policy is guaranteed optimal only under highly restrictive conditions and emphasize the need for practical solutions to account for the long-term effects.

\section{Optimizing the long-term social welfare} \label{sec:proposal}

The key observation from the previous sections is that myopic-greedy policy fails by ignoring the \textit{population regret}, which comes from the difference between the population of the current policy and that of the optimal one. Therefore, we first establish a policy learning method that optimizes for the {population regret} and later consider balancing this with the policy regret. 
However, one difficulty in minimizing the population regret $\Delta R(\blambda_t^{\ast}, \blambda_t)$ (Theorem~\ref{prop:regret}) is that the population $\blambda_{t}$ depends on the past choices of policy $\bpi$. This means that when optimizing the policy, we should take into account its future influence on the population. 
Because we know that the population gradually changes towards the reference population $\bar{\lambda}(\cdot)$, we consider the following \textbf{Look-ahead policy}:
\begin{align}
    \bpi_t^{(d)} \; := \; {\arg\max}_{\bpi \in \Pi} \; R(\bar{\bpi}_t^{1}(\bpi); \bar{\blambda}_t(\bpi)) \label{eq:look_ahead_policy}
\end{align}
Above, $\bar{\blambda}_t(\bpi)$ is the reference population at timestep $t$ given the viewer satisfaction and provider exposure realized by the policy $\bpi$ at population $\blambda_t$, i.e., $\bar{\lambda}_k(s_{t, k})$ and $\bar{\lambda}_l(e_{t, l})$. 
$\bar{\bpi}_t^{1}(\bpi)$ is the myopic-greedy policy at the reference population $\bar{\blambda}_t(\bpi)$. 
Thus, the lookahead policy focuses on reaching reference populations which enable high user satisfaction.

The look-ahead policy's optimization problem is potentially nonconvex.
To make it differential, one can consider the following softmax policy as the approximation of $\bar{\bpi}_t^{1}$:
\begin{align*}
    \bar{\pi}_{t,k,l}^{1} = \frac{\exp(\gamma \cdot (b_{k,l} + f_{k,l}(\bar{\lambda}_l(e_l)))}{\sum_{l' \in [L]} \exp(\gamma \cdot (b_{k,l'} + f_{k,l'}(\bar{\lambda}_{l'}(e_{l'})))} 
\end{align*}
where $\gamma > 0$ is the inverse temperature parameter. Then, we can optimize the objective function in Eq.~\eqref{eq:look_ahead_policy} via gradient ascent, where we present the exact gradient in Appendix~\ref{app:gradient}.

Once we obtain the look-ahead policy, we can interpolate between the look-ahead policy and the myopic-greedy policy to balance the population and policy regrets as follows:
\begin{align}
    \bpi_t = \beta \bpi_t^{(d)} + (1 - \beta) \bpi_t^{(m)} \label{eq:interpolation}
\end{align}
where $\beta \in [0, 1]$ is the interpolation hyperparameter and $\bpi_t^{(m)}$ is the myopic-greedy policy. 
$\beta$ can be determined by the platform's desire to focus on short vs. long term goals, while we later show that $\beta=1.0$ can be a reasonable choice from the experiment results.

\subsection{Estimation the dynamics} \label{sec:dyanmics_estimation}

In practice, there may be situations in which we need to estimate the dynamics function ($\bar{\lambda}$ and $f$) using some function approximation. In such case, we can use the following Explore-then-Commit style estimation:
\begin{enumerate}
    \item For $t \leq T_b$, deploy some epsilon-greedy policy and collect the data of $D_t := (s_k, e_l, q_{k,l}, \lambda_k, \lambda_l),  \forall (k, l) \in [K] \times [L]$. Then, update the dataset as $\mathcal{D}_t = \mathcal{D}_{t-1} \cup D_t$ where $\mathcal{D}_0 := \phi$ (empty set). $T_b$ is a burn-in period.
    \item For $t > T_b$, update the dynamics models ($\widehat{q}_{l,k}(\lambda_{l}), \widehat{\lambda}(s_{k}), \widehat{\lambda}(e_{l})$) via supervised learning.
    Note that the true ``reference population'' is obtained from data as $\bar{\lambda}_{t'} = \eta^{-1} (\lambda_{t'+1} - \lambda_{t'}) + \lambda_{t'}$ from  Eqs.~\eqref{eq:user_dynamics} and \eqref{eq:content_dynamics}.
    Then, optimize the policy with Eqs.~\eqref{eq:look_ahead_policy} and \eqref{eq:interpolation} using the estimated functions $\widehat{\lambda}$ and $\widehat{q}$, collect the interaction data $D_t$, and add them to the dataset $\mathcal{D}_{t}$.
\end{enumerate}

\begin{figure*}
\begin{minipage}{0.99\hsize}
  \centering
  \includegraphics[clip, width=0.98\linewidth]{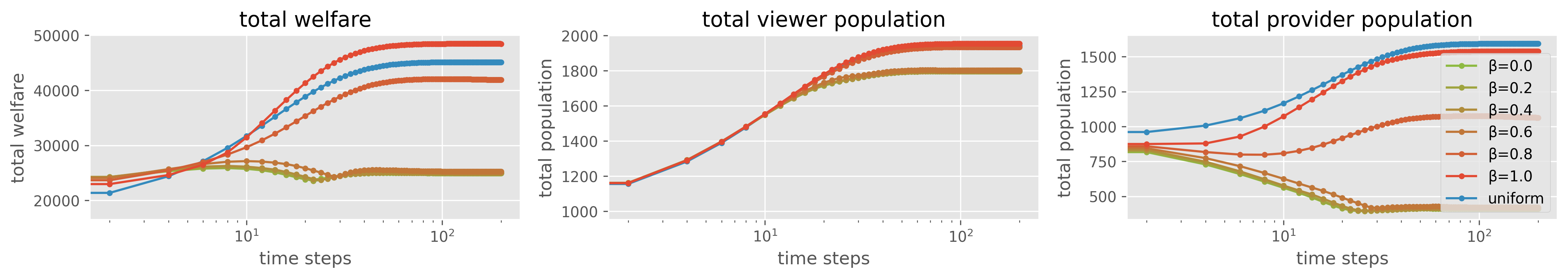} 
  \vspace{1mm}
\end{minipage}
\begin{minipage}{0.99\hsize}
  \centering
  \includegraphics[clip, width=0.98\linewidth]{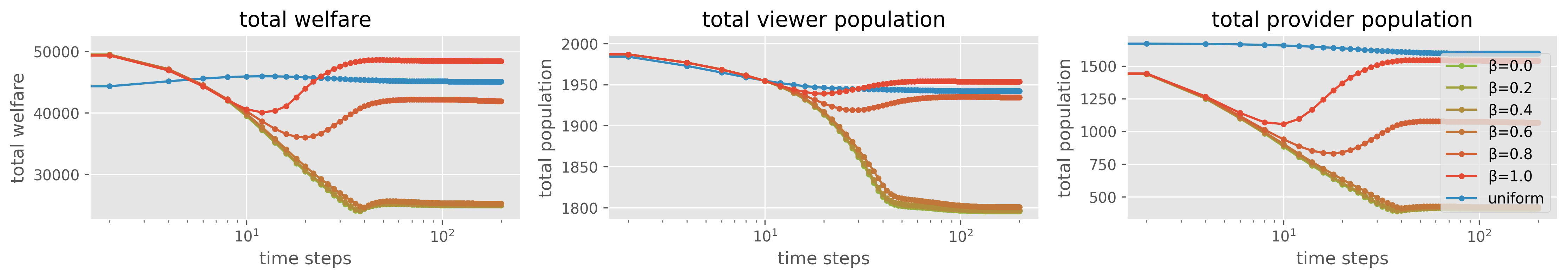} 
  \caption{\textbf{Comparing the total welfare, and the viewer and provider populations with varying values of interpolation hyperparam, i.e., $\beta$}. (Top) small initial population and (Bottom) large initial population. ``uniform'' represents the uniform random policy. 
  } \label{fig:synthetic_dynamics}
  \vspace{3mm}
\end{minipage}
\end{figure*}

\begin{figure*}[t]
\centering
\includegraphics[clip, width=0.95\linewidth]{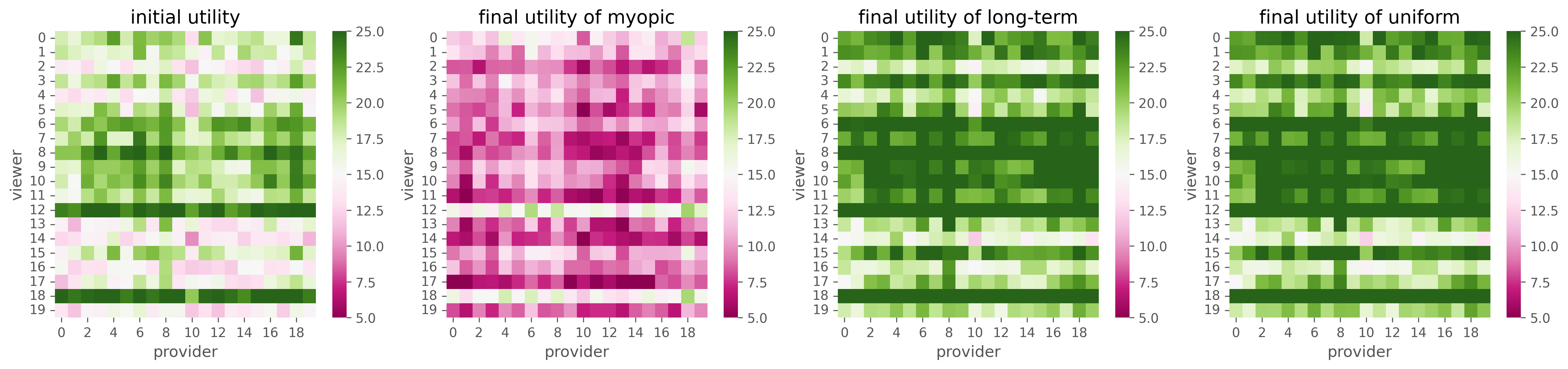}
    \caption{\textbf{Comparing the utility matrix of the myopic ($\beta=0.0$), long-term ($\beta=1.0$), and uniform random policies at the final timestep and the initial utility matrix.} For the initial utility matrix, we use the one with a small initial population. 
  } 
  \label{fig:synthetic_reward_matrix}
\end{figure*}

\section{Synthetic Experiment} \label{sec:synthetic_experiment}

We first study the dynamics and the performance of the proposed method in a synthetic experiment. In this task, we use $K = L = 20$ subgroups. To define the base utility, we first sample 20-dimensional binary feature vectors ($u_k, c_l$) from a Bernoulli distribution for each viewer and provider group and let their inner products be the base utility $b_{k,l} = u_k^{\top} c_l, \forall (k,l) \in [K] \times [L]$. Then we simulate the following concave dynamics:
\begin{align}
    \bar{\lambda}_k(z) 
        &= \lambda_k^{(max)} (\sigma(z / \tau_k^{(\lambda)}) - 0.5),  \label{eq:experiment_population_dynamics}
\end{align}
where $\sigma(z) := 1 / (1 + \exp(-z))$ is the sigmoid function, and $\bar{\lambda}(\cdot)$ follows the upper half of the sigmoid function.

Next, to simulate a \textit{heterogeneous} population effect, we further take inner products between viewer embeddings $u_k$ and the vector of population-dependent quality as follows. 
\begin{align}
    f_{k,l}(\lambda_l) = u_k^{\top} [\bar{f}_l^{(1)}(\lambda_l), \bar{f}_l^{(2)}(\lambda_l), \cdots, \bar{f}_l^{(d)}(\lambda_l)], \label{eq:synthetic_population_effect}
\end{align} 
where its $i$-th quality element follows the upper half of the sigmoid, i.e., $\bar{f}_{l}^{(i)}(z) = F_{l}^{(i), (max)} (\sigma(z / \tau_{l}^{(i), (F)}) - 0.5)$.
We use $d=20$. With this model, each provider group has different improvements in  quality of content, e.g., visuals, humor, and technical depth, and each viewer group has different preferences on these aspects of quality. We visualize the population effect in Figure~\ref{fig:synthetic_population_effect} in the Appendix.

We initialized the subgroups populations by sampling values from the normal distribution, so we have the majority and minority subgroups at $t=0$. 
Specifically, we use two initializations: (1) a small population ($\lambda \sim \mathcal{N}(20, 10^2)$) and (2) a large one ($\lambda \sim \mathcal{N}(100, 30^2)$) to see how policies perform in both increasing and decreasing dynamics.

\textbf{Compared methods}. \;
We compare the proposed look-ahead policy with varying interpolation hyperparameter $\beta \in [0.0, 0.2, \ldots, 1.0]$.
We also compare with the uniform random policy as a reference. When computing the look-ahead policy, we assume access to the dynamics and population effect functions. 
The lookahead policy is computed with gradient ascent on the objective (Eq.~\eqref{eq:look_ahead_policy}) for 100 iterations. 

\textbf{Results. } \;
We run the compared methods for 200 timesteps and report the results in Figure~\ref{fig:synthetic_dynamics}. 
The results demonstrate that the long-term (look-ahead) policy performs better than the myopic-greedy policy, as the reward gain from the population effects is large
in this setting. 
Specifically, 
we observe that the pure look-ahead policy ($\beta=1.0$) increases the provider populations while the myopic-greedy policy decreases the provider populations. 
These population changes immediately affect the total welfare, 
suggesting that guaranteeing high population among multiple subgroups via balanced exposure allocation is crucial when population effects matter. 
Indeed, we also observe a different distribution of the utility matrix at the final timestep across compared methods in Figure~\ref{fig:synthetic_reward_matrix}. 
Interestingly, while the uniform policy has the largest provider population at the final timestep in Figure~\ref{fig:synthetic_dynamics}, the look-ahead policy ($\beta=1.0$) achieves better total welfare (and population regret). This is because the look ahead policy allocates exposure more efficiently than the uniform policy to ensure both high viewer satisfaction and high provider exposure among multiple subgroups, empirically demonstrating the effectiveness of our approach.

\begin{figure*}[t]
\centering
\includegraphics[clip, width=0.95\linewidth]{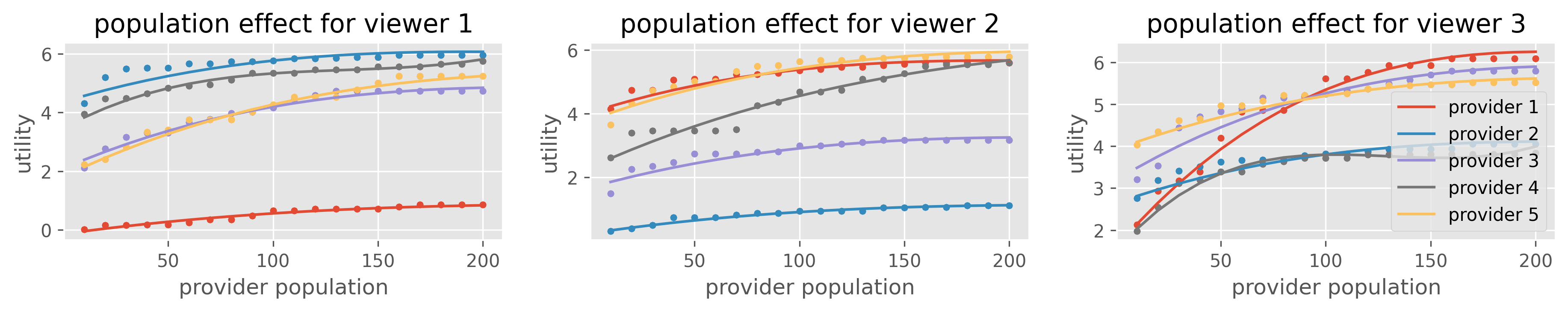}
    \caption{\textbf{Visualization of the (true) population effects in the real-world experiment.} The population effects are based on the spline function~\citep{reinsch1967smoothing} fitted on the empirical population effect (dotted points) observed in the KuaiRec~\citep{gao2022kuairec} dataset. Figures~\ref{fig:estimation_population_effect} and \ref{fig:estimation_population_dynamics} in the Appendix also report the population effects and dynamics learned by the long-term policy, following Section~\ref{sec:dyanmics_estimation}.}
  \label{fig:real_population_effect}
\end{figure*}

\begin{figure*}
\begin{minipage}{0.99\hsize}
  \centering
  \includegraphics[clip, width=0.98\linewidth]{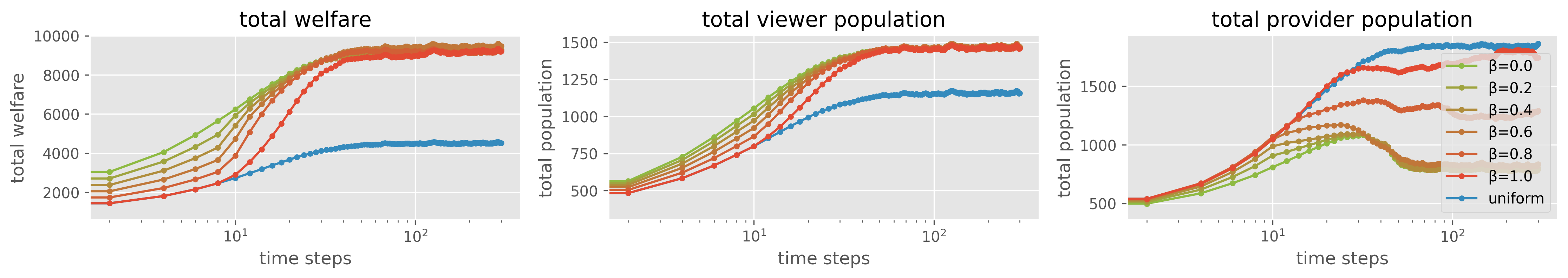} 
  \vspace{1mm}
\end{minipage}
\begin{minipage}{0.99\hsize}
  \centering
  \includegraphics[clip, width=0.98\linewidth]{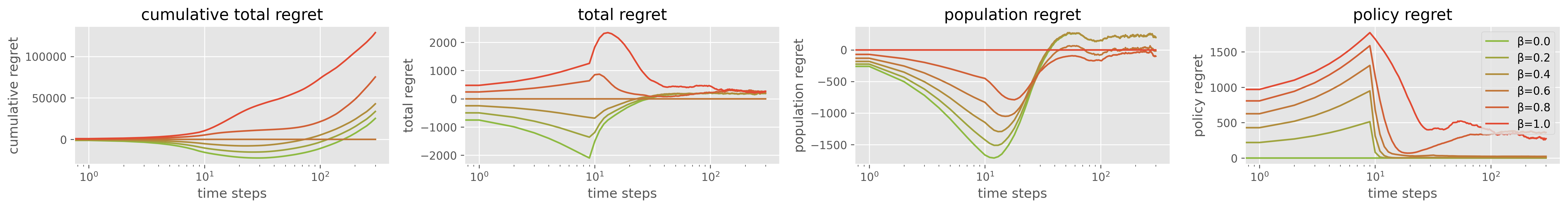} 
  \caption{\textbf{Comparing the total welfare, viewer and provider populations, and regrets in the real-data experiment.} 
  Cumulative regret is the sum of total regret by the timestep $t$, and the total regret is decomposed into the population and policy regrets. Note that the true optimal policies that minimize the total regret and population regret are not accessible. Thus, we report the empirical regrets by letting one of the compared policies as the optimal baseline.
  } 
  \label{fig:real_dynamics}
  \vspace{3mm}
\end{minipage}
\end{figure*}

\section{Real-data Experiment}

This section studies the empirical behavior of the proposed method using 
the KuaiRec~\citep{gao2022kuairec} dataset.

\textbf{Datasets.} \; KuaiRec (dense)~\citep{gao2022kuairec} is a viewer-provider interaction dataset consisting of 4,676,570 data samples with 1,411 viewers and 3,326 videos (i.e., providers). The data contains ``watch ratio'' (i.e., play duration divided by the video duration) as the viewers feedback signal. We clip the maximum watch ratio by 10 and learn the viewer-provider base utility $b(u,c)$ using a neural collaborative filtering (CF) model~\citep{he2017neural}. The base utility is calculated as individual level ($(u, c) \in \mathcal{U} \times \mathcal{C}$), where $u$ and $c$ are viewer and provider embeddings.

\textbf{Simulation.} \; 
We simulate the subgroup of viewers and providers, following the procedure presented in \citet{bose2023initializing}. Specifically, we cluster viewers and providers into $K = L = 20$ subgroups respectively, based on the viewer and provider embeddings learned by the neural CF model~\citep{he2017neural}. 
We use the same initialization and dynamics of the population as described in Section~\ref{sec:synthetic_experiment}.
Then, we simulate the utility and population effects as follows.

\begin{enumerate}
    \item Let $u_k := \mathbb{E}[u \,|\, u \in \mathcal{U}_k]$ be the mean embeddings of viewer group $k$. We define the group-wise base utility as $b_{l,k} = \mathbb{E}[b(u_k,c) \,|\, c \in\mathcal{C}_l]$ (i.e.,  mean utility that viewer group $k$ receives from providers in group $l$).
    \item Next, to simulate a population effect, we generate a random permutation of providers within each provider group. Given the current provider population $\lambda_{t,l}$, we  let first $\lambda_{t,l}$ samples in the permutation as the the set of providers in the subgroup $l$ used at timestep $t$. We denote this subset as $\mathcal{C}_{t,l}$.
    Then, we define the utility from the provider group $l$ as $q(u_k, c_l) := {\arg\max}_{c \in \mathcal{C}_{t,l}} b(u_k,c)$. Therefore, the population effects are defined as 
    \begin{align*}
        f_{k,l}(\lambda_l) := {\arg\max}_{c \in \mathcal{C}_{t,l}(\lambda_{t,l})} b(u_k,c) - b_{k,l},
    \end{align*}
    which increases monotonically as provider population ($\lambda_{t,l}$) increases. 
\end{enumerate}
To obtain a smooth population effect function,
we generate 10 different random permutation in Step 2. Then, we take the average of 10 population effects and fit spline functions~\citep{reinsch1967smoothing} implemented in SciPy~\citep{virtanen2020scipy}. The resulting population effects are in Figure~\ref{fig:real_population_effect}.

\textbf{Estimation of the dynamics.} \; In this experiment, we estimate the dynamics functions $\bar{\lambda}$ and $f$ using regression. 
We use the model $F(z) = a_0 (1 - \exp(- a_1 (x - a_2))) + a_3$ due to its concavity and flexibility, and fit the params $(a_0, \cdots, a_3)$
from interaction data as described in Section~\ref{sec:dyanmics_estimation}. 
Note that we add perturbations in the population dynamics $\xi_t$ sampled from a normal distribution as $\xi_t \sim \mathcal{N}(0.0, 0.01^2) \times \blambda_t$ (i.e., the scale of perturbation is proportional to the population) to account for the difficulty in learning the real-world dynamics. During the burn-in period (10 steps), we deploy epsilon-greedy with the corresponding value of $\beta \, (= \epsilon)$. 

\textbf{Results.} \;
Figure~\ref{fig:real_dynamics} report the population dynamics, total welfare, and the regret. Unlike synthetic experiment, the myopic-greedy policy performs better than the uniform random policy, while the proposed look-ahead policy is competitive to the myopic-policy in the total welfare after converging to the NE. However, we observe some tradeoff between the myopic-greedy and look-ahead policy. Specifically, Figure~\ref{fig:real_dynamics} (Bottom) suggests that the myopic-greedy policy ($\beta=0.0$) has some population regret compared to the look-ahead policy ($\beta=1.0$), while the look-ahead policy retains some policy regret. As the result, an interpolated policy with $\beta=0.6$ is the best among the compared methods, while all interpolated policies with various values of $\beta$ perform quite well. It is also worth mentioning that the look-ahead policy ($\beta=1.0$) maintains high total welfare even though achieving almost the same level of the provider population as the uniform policy. This suggests that the proposed look-ahead policy is able to allocate exposure efficiently by considering the long-term population effects.

Together with the synthetic experiment results, we observe that the proposed look-ahead policy adaptively behaves (near-)optimally in terms of total welfare, while also guaranteeing a high provider population through provider-fair exposure allocation. This minimizes the effort to tune the hyperparam $\beta$ as the look-ahead policy ($\beta=1.0$) works reasonably well in practical situations. 

\section{Related work} \label{app:related_work}

\textbf{Population shifts.} \quad The most relevant existing works to ours are \citet{mladenov2020optimizing} and \citet{huttenlocher2023matching}, which consider the population dynamics by modeling the departure of viewers and providers. Specifically, \citet{mladenov2020optimizing} assume that a provider will leave the platform if the provider cannot receive adequate exposure (i.e., exposure is below some given threshold). Then, \citet{mladenov2020optimizing} solve the constrained optimization problem as a linear integer program and demonstrate that provider fairness is crucial to maintain a high viewer welfare. To extend, \citet{huttenlocher2023matching} additionally consider the departure of viewers who receive less utility than given thresholds. \citet{huttenlocher2023matching} also formulate a matching problem to determine which viewers and providers to keep in the platform to achieve high long-term social welfare. However, both works ignore the possible growth of the platform, and how a policy design affects the ``growing-the-pie'' behavior has remained underexplored (also, these existing works are not directly applicable to our problem setting). Our work complements these works by finding that provider fairness is important to ensure high ``population effects'' in a generalized setting.

Note that our work is also closely related to the performative prediction literature~\citep{perdomo2020performative, brown2022performative, narang2023multiplayer}. This line of work studies the policy optimization under a stationary point of dynamics, when the algorithm affects the \textit{states} (e.g., population) of the environment, and the optimal algorithm can change depending on the states.  
While these works (theoretically) discuss the model optimization and its convergence under a convex loss function~\citep{brown2020language}, our work empirically demonstrates that the look-ahead policy gradient approach (i.e., non-convex optimization) can work in a complex and interactive  dynamics of two-sided platforms. 

\textbf{Strategic content providers.} \quad
Another related literature is the policy optimization under strategic content providers~\citep{hron2022modeling, jagadeesan2022supply,  yao2023rethinking}. These works often formulate content providers as ``selfish'' agents who maximize only their own utility defined by the amount of exposure minus the cost of content generation. As described in Section~\ref{sec:game_formulation}, our problem setting can also be seen as a variant of policy optimization under strategic viewers and content providers. However, our formulation is distinctive in modeling the increase and decrease of the total population, while existing works assume that the total number of viewers and providers are fixed. This difference results in novel findings: while \citet{hron2022modeling} find that more exploration (randomness) can reduce incentives for producing high-quality ``niche'' content, we find that a more random provider-fair policy can be beneficial when taking the population growth of multiple groups into account.

\textbf{Fairness, diversity, and welfare.}
Fairness and diversity among providers have been considered as necessary metrics or constraints when optimizing policies in two-sided platforms~\citep{singh2018fairness, wang2021user, boutilier2023modeling}. While provider-fairness is initially considered important from provider-side perspectives~\citep{singh2018fairness}, recent works considers the impacts of provider fairness on viewer welfare. Specifically, provider fairness turned out important to maintain provider diversity~\citep{yao2023rethinking, hron2022modeling}, and provider diversity helps maintain viewer welfare in the long-run~\citep{su2023value, mladenov2020optimizing}. Our findings align with these works in pointing out that provider-fairness is important for long-term viewer satisfaction, but from a different viewpoint based on 
the growth of populations.

\section{Conclusion}
This paper studies recommender 
policy design in two-sided platforms where viewer and provider populations matter. Through the control- and game-theoretic analyses, we found that the myopic-greedy policy is guaranteed optimal only when the population effects are linear and homogeneous among provider groups, and otherwise may fall short by ignoring the 
``population effects''. To take such long-term effects into account, we proposed a simple algorithm and empirically demonstrate that it provides both viewer satisfaction and provider exposure for future population growth.
We believe our work provides a cornerstone to build dynamics-aware allocation policies in two-sided platforms where multiple stakeholders engage.

\section*{Acknowledgements}
This work was partly funded by NSF CCF 2312774, NSF OAC-2311521, and a gift to the LinkedIn-Cornell Bowers CIS Strategic Partnership. The Funai Overseas Scholarship supports Haruka Kiyohara.

We thank Thorsten Jaochims and Richa Rastogi for their thoughtful feedback on the early-stage discussion about the dynamics in two-sided platforms. We also thank anonymous reviewers for the valuable discussion during the review.

\section*{Impact Statement}
The real-world experiments are run on a public recommendation dataset, and we do not consider there are ethical concerns in our experiments.

For the broader social impact, this paper has shed lights on the population effects in two-sided platforms for the first time. Our findings suggest that ensuring the provider exposure among multiple subgroups, instead of concentrating on a single subgroup, is important to maintain a high population growth and utility. This is also relevant to provider-side fairness~\citep{singh2018fairness} and diversity~\citep{su2023value} in many decision-making scenarios (Extended related work in Appendix~\ref{app:related_work} provides a further discussion). Moreover, the population effects potentially matter in broad applications including recommender systems, social networking services, and even school admission or hiring processes. Working on such application-specific configurations would also be an impactful future direction of the research community.

\bibliographystyle{icml2025}
% \bibliography{ref.bib}

\begin{thebibliography}{34}
\providecommand{\natexlab}[1]{#1}
\providecommand{\url}[1]{\texttt{#1}}
\expandafter\ifx\csname urlstyle\endcsname\relax
  \providecommand{\doi}[1]{doi: #1}\else
  \providecommand{\doi}{doi: \begingroup \urlstyle{rm}\Url}\fi

\bibitem[Bose et~al.(2023)Bose, Curmei, Jiang, Morgenstern, Dean, Ratliff, and Fazel]{bose2023initializing}
Bose, A., Curmei, M., Jiang, D.~L., Morgenstern, J., Dean, S., Ratliff, L.~J., and Fazel, M.
\newblock Initializing services in interactive ml systems for diverse users.
\newblock \emph{arXiv preprint arXiv:2312.11846}, 2023.

\bibitem[Boutilier et~al.(2023)Boutilier, Mladenov, and Tennenholtz]{boutilier2023modeling}
Boutilier, C., Mladenov, M., and Tennenholtz, G.
\newblock Modeling recommender ecosystems: Research challenges at the intersection of mechanism design, reinforcement learning and generative models.
\newblock \emph{arXiv preprint arXiv:2309.06375}, 2023.

\bibitem[Brown et~al.(2022)Brown, Hod, and Kalemaj]{brown2022performative}
Brown, G., Hod, S., and Kalemaj, I.
\newblock Performative prediction in a stateful world.
\newblock In \emph{International Conference on Artificial Intelligence and Statistics}, pp.\  6045--6061, 2022.

\bibitem[Brown et~al.(2020)Brown, Mann, Ryder, Subbiah, Kaplan, Dhariwal, Neelakantan, Shyam, Sastry, Askell, Agarwal, Herbert-Voss, Krueger, Henighan, Child, Ramesh, Ziegler, Wu, Winter, Hesse, Chen, Sigler, Litwin, Gray, Chess, Clark, Berner, McCandlish, Radford, Sutskever, and Amodei]{brown2020language}
Brown, T., Mann, B., Ryder, N., Subbiah, M., Kaplan, J., Dhariwal, P., Neelakantan, A., Shyam, P., Sastry, G., Askell, A., Agarwal, S., Herbert-Voss, A., Krueger, G., Henighan, T., Child, R., Ramesh, A., Ziegler, D.~M., Wu, J., Winter, C., Hesse, C., Chen, M., Sigler, E., Litwin, M., Gray, S., Chess, B., Clark, J., Berner, C., McCandlish, S., Radford, A., Sutskever, I., and Amodei, D.
\newblock Language models are few-shot learners.
\newblock \emph{Advances in Neural Information Processing Systems}, 33:\penalty0 1877--1901, 2020.

\bibitem[Bunch et~al.(1978)Bunch, Nielsen, and Sorensen]{bunch1978rank}
Bunch, J.~R., Nielsen, C.~P., and Sorensen, D.~C.
\newblock Rank-one modification of the symmetric eigenproblem.
\newblock \emph{Numerische Mathematik}, 31\penalty0 (1):\penalty0 31--48, 1978.

\bibitem[Cournot(1838)]{cournot1838recherches}
Cournot, A.~A.
\newblock \emph{Recherches sur les principes math{\'e}matiques de la th{\'e}orie des richesses}, volume~48.
\newblock L. Hachette, 1838.

\bibitem[Dean et~al.(2022)Dean, Curmei, Ratliff, Morgenstern, and Fazel]{dean2022emergent}
Dean, S., Curmei, M., Ratliff, L.~J., Morgenstern, J., and Fazel, M.
\newblock Emergent segmentation from participation dynamics and multi-learner retraining.
\newblock \emph{arXiv preprint arXiv:2206.02667}, 2022.

\bibitem[Deffayet et~al.(2024)Deffayet, Thonet, Hwang, Lehoux, Renders, and de~Rijke]{deffayet2024sardine}
Deffayet, R., Thonet, T., Hwang, D., Lehoux, V., Renders, J.-M., and de~Rijke, M.
\newblock Sardine: Simulator for automated recommendation in dynamic and interactive environments.
\newblock \emph{ACM Transactions on Recommender Systems}, 2\penalty0 (3):\penalty0 1--34, 2024.

\bibitem[Evnine et~al.(2024)Evnine, Ioannidis, Kalimeris, Kalyanaraman, Li, Nir, Sun, and Weinsberg]{evnine2024achieving}
Evnine, A., Ioannidis, S., Kalimeris, D., Kalyanaraman, S., Li, W., Nir, I., Sun, W., and Weinsberg, U.
\newblock Achieving a better tradeoff in multi-stage recommender systems through personalization.
\newblock In \emph{Proceedings of the 30th ACM SIGKDD Conference on Knowledge Discovery and Data Mining}, pp.\  4939--4950, 2024.

\bibitem[Fan(1949)]{fan1949theorem}
Fan, K.
\newblock On a theorem of {W}eyl concerning eigenvalues of linear transformations {I}.
\newblock \emph{Proceedings of the National Academy of Sciences of the United States of America}, 35\penalty0 (11):\penalty0 652, 1949.

\bibitem[Gao et~al.(2022)Gao, Li, Lei, Chen, Li, Jiang, He, Mao, and Chua]{gao2022kuairec}
Gao, C., Li, S., Lei, W., Chen, J., Li, B., Jiang, P., He, X., Mao, J., and Chua, T.-S.
\newblock Kuairec: A fully-observed dataset and insights for evaluating recommender systems.
\newblock In \emph{Proceedings of the 31st ACM International Conference on Information \& Knowledge Management}, pp.\  540--550, 2022.

\bibitem[Hashimoto et~al.(2018)Hashimoto, Srivastava, Namkoong, and Liang]{hashimoto2018fairness}
Hashimoto, T., Srivastava, M., Namkoong, H., and Liang, P.
\newblock Fairness without demographics in repeated loss minimization.
\newblock In \emph{Pcodeedings of the 35th International Conference on Machine Learning}, pp.\  1929--1938. PMLR, 2018.

\bibitem[He et~al.(2017)He, Liao, Zhang, Nie, Hu, and Chua]{he2017neural}
He, X., Liao, L., Zhang, H., Nie, L., Hu, X., and Chua, T.-S.
\newblock Neural collaborative filtering.
\newblock In \emph{Proceedings of the 26th International Conference on World Wide Web}, pp.\  173--182, 2017.

\bibitem[Hron et~al.(2022)Hron, Krauth, Jordan, Kilbertus, and Dean]{hron2022modeling}
Hron, J., Krauth, K., Jordan, M., Kilbertus, N., and Dean, S.
\newblock Modeling content creator incentives on algorithm-curated platforms.
\newblock In \emph{The Eleventh International Conference on Learning Representations}, 2022.

\bibitem[Huttenlocher et~al.(2023)Huttenlocher, Li, Lyu, Ozdaglar, and Siderius]{huttenlocher2023matching}
Huttenlocher, D., Li, H., Lyu, L., Ozdaglar, A., and Siderius, J.
\newblock Matching of users and creators in two-sided markets with departures.
\newblock \emph{arXiv preprint arXiv:2401.00313}, 2023.

\bibitem[Jagadeesan et~al.(2022)Jagadeesan, Garg, and Steinhardt]{jagadeesan2022supply}
Jagadeesan, M., Garg, N., and Steinhardt, J.
\newblock Supply-side equilibria in recommender systems.
\newblock \emph{arXiv preprint arXiv:2206.13489}, 2022.

\bibitem[Li et~al.(2010)Li, Chu, Langford, and Schapire]{li2010contextual}
Li, L., Chu, W., Langford, J., and Schapire, R.~E.
\newblock A contextual-bandit approach to personalized news article recommendation.
\newblock In \emph{Proceedings of the 19th International Conference on World Wide Web}, pp.\  661--670, 2010.

\bibitem[Mazumdar et~al.(1901)Mazumdar, Sastry, and Jordan]{mazumdar1901finding}
Mazumdar, E., Sastry, S.~S., and Jordan, M.~I.
\newblock On finding local nash equilibria (and only local nash equilibria) in zero-sum games.
\newblock \emph{ACM/IMS Journal of Data Science}, 1901.

\bibitem[Mladenov et~al.(2020)Mladenov, Creager, Ben-Porat, Swersky, Zemel, and Boutilier]{mladenov2020optimizing}
Mladenov, M., Creager, E., Ben-Porat, O., Swersky, K., Zemel, R., and Boutilier, C.
\newblock Optimizing long-term social welfare in recommender systems: A constrained matching approach.
\newblock In \emph{Proceedings of the 37th International Conference on Machine Learning}, pp.\  6987--6998. PMLR, 2020.

\bibitem[Narang et~al.(2023)Narang, Faulkner, Drusvyatskiy, Fazel, and Ratliff]{narang2023multiplayer}
Narang, A., Faulkner, E., Drusvyatskiy, D., Fazel, M., and Ratliff, L.~J.
\newblock Multiplayer performative prediction: Learning in decision-dependent games.
\newblock \emph{Journal of Machine Learning Research}, 24\penalty0 (202):\penalty0 1--56, 2023.

\bibitem[Nash~Jr(1950)]{nash1950equilibrium}
Nash~Jr, J.~F.
\newblock Equilibrium points in n-person games.
\newblock \emph{Proceedings of the national academy of sciences}, 36\penalty0 (1):\penalty0 48--49, 1950.

\bibitem[Paszke et~al.(2019)Paszke, Gross, Massa, Lerer, Bradbury, Chanan, Killeen, Lin, Gimelshein, Antiga, et~al.]{paszke2019pytorch}
Paszke, A., Gross, S., Massa, F., Lerer, A., Bradbury, J., Chanan, G., Killeen, T., Lin, Z., Gimelshein, N., Antiga, L., et~al.
\newblock Pytorch: An imperative style, high-performance deep learning library.
\newblock \emph{Advances in neural information processing systems}, 32, 2019.

\bibitem[Perdomo et~al.(2020)Perdomo, Zrnic, Mendler-D{\"u}nner, and Hardt]{perdomo2020performative}
Perdomo, J., Zrnic, T., Mendler-D{\"u}nner, C., and Hardt, M.
\newblock Performative prediction.
\newblock In \emph{International Conference on Machine Learning}, pp.\  7599--7609, 2020.

\bibitem[Prasad et~al.(2023)Prasad, Mladenov, and Boutilier]{prasad2023content}
Prasad, S., Mladenov, M., and Boutilier, C.
\newblock Content prompting: Modeling content provider dynamics to improve user welfare in recommender ecosystems.
\newblock \emph{arXiv preprint arXiv:2309.00940}, 2023.

\bibitem[Reinsch(1967)]{reinsch1967smoothing}
Reinsch, C.~H.
\newblock Smoothing by spline functions.
\newblock \emph{Numerische mathematik}, 10\penalty0 (3):\penalty0 177--183, 1967.

\bibitem[Rosen(1965)]{rosen1965existence}
Rosen, J.~B.
\newblock Existence and uniqueness of equilibrium points for concave n-person games.
\newblock \emph{Econometrica: Journal of the Econometric Society}, pp.\  520--534, 1965.

\bibitem[Singh \& Joachims(2018)Singh and Joachims]{singh2018fairness}
Singh, A. and Joachims, T.
\newblock Fairness of exposure in rankings.
\newblock In \emph{Proceedings of the 24th ACM SIGKDD international conference on knowledge discovery \& data mining}, pp.\  2219--2228, 2018.

\bibitem[Su et~al.(2023)Su, Wang, Le, Liu, Li, Lu, Lipshitz, Badam, Heldt, Bi, et~al.]{su2023value}
Su, Y., Wang, X., Le, E.~Y., Liu, L., Li, Y., Lu, H., Lipshitz, B., Badam, S., Heldt, L., Bi, S., et~al.
\newblock Value of exploration: Measurements, findings and algorithms.
\newblock \emph{arXiv preprint arXiv:2305.07764}, 2023.

\bibitem[Virtanen et~al.(2020)Virtanen, Gommers, Oliphant, Haberland, Reddy, Cournapeau, Burovski, Peterson, Weckesser, Bright, et~al.]{virtanen2020scipy}
Virtanen, P., Gommers, R., Oliphant, T.~E., Haberland, M., Reddy, T., Cournapeau, D., Burovski, E., Peterson, P., Weckesser, W., Bright, J., et~al.
\newblock Scipy 1.0: fundamental algorithms for scientific computing in python.
\newblock \emph{Nature methods}, 17\penalty0 (3):\penalty0 261--272, 2020.

\bibitem[Wang \& Joachims(2021)Wang and Joachims]{wang2021user}
Wang, L. and Joachims, T.
\newblock User fairness, item fairness, and diversity for rankings in two-sided markets.
\newblock In \emph{Proceedings of the ACM SIGIR International Conference on Theory of Information Retrieval}, pp.\  23--41, 2021.

\bibitem[Yao et~al.(2023{\natexlab{a}})Yao, Li, Nekipelov, Wang, and Xu]{yao2023bad}
Yao, F., Li, C., Nekipelov, D., Wang, H., and Xu, H.
\newblock How bad is top-$ k $ recommendation under competing content creators?
\newblock In \emph{International Conference on Machine Learning}, pp.\  39674--39701. PMLR, 2023{\natexlab{a}}.

\bibitem[Yao et~al.(2023{\natexlab{b}})Yao, Li, Sankararaman, Liao, Zhu, Wang, Wang, and Xu]{yao2023rethinking}
Yao, F., Li, C., Sankararaman, K.~A., Liao, Y., Zhu, Y., Wang, Q., Wang, H., and Xu, H.
\newblock Rethinking incentives in recommender systems: are monotone rewards always beneficial?
\newblock \emph{Advances in Neural Information Processing Systems}, 36, 2023{\natexlab{b}}.

\bibitem[Yao et~al.(2024{\natexlab{a}})Yao, Liao, Liu, Nie, Wang, Xu, and Wang]{yao2024exploration}
Yao, F., Liao, Y., Liu, J., Nie, S., Wang, Q., Xu, H., and Wang, H.
\newblock Unveiling user satisfaction and creator productivity trade-offs in recommendation platforms.
\newblock \emph{Advances in Neural Information Processing Systems}, 2024{\natexlab{a}}.

\bibitem[Yao et~al.(2024{\natexlab{b}})Yao, Liao, Wu, Li, Zhu, Yang, Liu, Wang, Xu, and Wang]{yao2024user}
Yao, F., Liao, Y., Wu, M., Li, C., Zhu, Y., Yang, J., Liu, J., Wang, Q., Xu, H., and Wang, H.
\newblock User welfare optimization in recommender systems with competing content creators.
\newblock In \emph{Proceedings of the 30th ACM SIGKDD Conference on Knowledge Discovery and Data Mining}, pp.\  3874--3885, 2024{\natexlab{b}}.

\end{thebibliography}

% \include{checklist}

\clearpage

\onecolumn
\appendix

% \tableofcontents

\section{Derivation of the gradient of Eq.~\eqref{eq:look_ahead_policy}} \label{app:gradient}

We derive the gradient of the look-ahead policy using the chain-rule as follows:
\begin{align*}
    \nabla_{\bpi} (R(\bar{\bpi}_t^{1}(\bpi); \bar{\blambda}_t(\bpi))) 
    &= \nabla_{\bpi} \left( \sum_{k=1}^{K} \bar{\lambda}_{k}(\bpi) \sum_{l=1}^L \bar{\pi}_{k,l}^{1}(\bpi) (b_{k,l} + f_{k,l}(\bar{\lambda}_l(\bpi))) \right) \\
    &= \sum_{k=1}^{K} \sum_{l=1}^L \nabla_{\bpi} (\bar{\pi}_{k,l}^{1}) \, \bar{\lambda}_k(\bpi) (b_{k,l} + f_{k,l}(\bar{\lambda}_l(\bpi))) \\
    & \quad + \sum_{k=1}^K \nabla_{\bpi} (\bar{\lambda}_k) \sum_{l=1}^L \bar{\pi}_{k,l}^{1}(\bpi) (b_{k,l} + f_{k,l}(\bar{\lambda}_l(\bpi))) \\
    & \quad + \sum_{k=1}^K \bar{\blambda}_k(\bpi) \sum_{l=1}^L \nabla_{\bar{\lambda}_l} (f_{k,l}) \nabla_{\bpi} (\bar{\lambda}_l) \, \bar{\pi}_{k,l}^{1}(\bpi),
\end{align*}
where the $(k', l')$-th element of the gradient matrix is defined as follows:
\begin{align*}
    \nabla_{\bpi} (\bar{\lambda}_k)_{k',l'} 
    &= \nabla_{s_k} (\bar{\lambda}_k) \nabla_{\bpi} (s_k)_{k',l'} 
    = \nabla_{s_k} (\bar{\lambda}_k) \cdot \mathbb{I} \{ k' = k \} \, q_{k',l'}, \\
    \nabla_{\bpi} (\bar{\lambda}_l)_{k',l'} 
    &= \nabla_{e_l} (\bar{\lambda}_l) \nabla_{\bpi} (e_l)_{k',l'} 
    = \nabla_{e_l} (\bar{\lambda}_l) \cdot \mathbb{I} \{ l' = l \} \lambda_{k'}.
\end{align*}
Note that $\nabla_{\bar{\lambda}_l} (f_{k,l})$, $\nabla_{s_k} (\bar{\lambda}_k)$, and $\nabla_{e_l} (\bar{\lambda}_l)$ can be the gradient of the estimated dynamics and population effect functions, when the true dynamics are not accessible (i.e., we can use the estimation process described in Section~\ref{sec:dyanmics_estimation}).

When implementing the algorithm, one can also use \texttt{autograd} implemented in PyTorch~\citep{paszke2019pytorch} to calculate the gradient directly from the look-ahead objective.

\section{Omitted Proofs} \label{app:proofs}

This section provides proofs for the Theorems and Propositions presented in the main text. Note that we define the fixed point, stability, and Nash equilibrium as follows.

\begin{definition}[Fixed point] \label{def:fixed_point} 
Let $S(\cdot)$ be the dynamics function that maps the population from the previous timestep to the next timestep, i.e., $\lambda_{t+1} = S(\lambda_{t})$. Then, $\blambda_{eq}$ is a fixed point under the policy $\bpi$ when $\blambda_{eq}$ satisfies,
\begin{align*}
    \blambda_{eq} = S(\blambda_{eq}),
\end{align*}
where $S$ is a static function under a static policy $\bpi$.
\end{definition}
\begin{definition}[Stability] \label{def:stability} A fixed point $\blambda_{eq}$ is stable under the policy $\bpi$ when $\blambda_{eq}$ satisfies,
\begin{align*}
    \forall \epsilon > 0, \, \exists \delta > 0, \quad \lVert \blambda_{0} - \blambda_{eq} \rVert < \delta \, \implies \, \lVert \blambda_{t} - \blambda_{eq} \rVert < \epsilon.
\end{align*}
This condition is satisfied when $\mathrm{det}(\nabla_{\lambda}S) < 1$ holds.
\end{definition}

\begin{definition}[Nash equilibrium] \label{def:nash_eq} $\blambda^*$ is a Nash equilibrium of game $\G(\bpi,B,f,\bar{\lambda})$ under the policy $\bpi$ when $\blambda^*$ satisfies,
\begin{align*}
    & \lambda_k^* = \arg\max_{\lambda_k} \; u_k(\lambda_k,\blambda_{-k}=\blambda_{-k}^*), \\
    & \lambda_l^* = \arg\max_{\lambda_l} \; v_l(\lambda_l, \blambda_{-l}=\blambda_{-l}^*),
\end{align*}
where $\blambda_{-k},\blambda_{-l}$ denotes the vectors that contain all elements of $\blambda^*$ except $\lambda_k$ and $\lambda_l$, and $u_k,v_l$ are determined by Eqs. \eqref{eq:util_user} and \eqref{eq:util_creator}.
\end{definition}

Based on the definition of fixed point, we prove the properties of the policy and the corresponding fixed point below.

\subsection{Proof of Proposition~\ref{coro:sufficient_eq}} \label{proof:condition_eq}
To prove Proposition~\ref{coro:sufficient_eq}, we first introduce the following Theorem~\ref{thrm:condition_eq}. We also use the fixed point described in the following in the proof of Theorem~\ref{thm:ne_exist}.

\begin{theorem}[Conditions for a fixed point] \label{thrm:condition_eq}
    $\blambda_{eq}$ is a stable equilibrium when the following is satisfied for all $l \in [L]$. 
    \begin{align*}
        \eta_l (1 - \eta_l) (\nabla_{e_l} \bar{\lambda}_l) (\nabla_{\lambda_l} f_l) \sum_{k=1}^{K} \eta_{k} (\nabla_{s_k} \bar{\lambda}_k) \pi_{k,l} < 1
    \end{align*}
    where $(\nabla \cdot)$ is the first-order derivative at $\blambda_{eq}$.
\end{theorem}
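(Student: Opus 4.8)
The plan is to linearize the population dynamics around the candidate fixed point $\blambda_{eq}$ and invoke the spectral stability criterion of Definition~\ref{def:stability}, then collapse the joint condition on the full $(K+L)$-dimensional Jacobian into the stated per-group inequality. First I would write the one-step map $\blambda_{t+1}=S(\blambda_t)$ explicitly: substituting $q_{k,l}=b_{k,l}+f_{k,l}(\lambda_l)$ into Eqs.~\eqref{eq:user_dynamics}--\eqref{eq:content_dynamics} gives $\lambda_{t+1,k}=(1-\eta_k)\lambda_{t,k}+\eta_k\bar{\lambda}_k(\sum_l\pi_{k,l}(b_{k,l}+f_{k,l}(\lambda_{t,l})))$ and $\lambda_{t+1,l}=(1-\eta_l)\lambda_{t,l}+\eta_l\bar{\lambda}_l(\sum_k\pi_{k,l}\lambda_{t,k})$. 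Differentiating at $\blambda_{eq}$ yields the block-structured Jacobian
\begin{align*}
  \nabla_\lambda S = \begin{pmatrix} D_V & A \\ B & D_P \end{pmatrix}, \qquad D_V = \mathrm{diag}(1-\eta_k)_{k\in[K]}, \quad D_P = \mathrm{diag}(1-\eta_l)_{l\in[L]},
\end{align*}
where $A_{k,l}=\eta_k(\nabla_{s_k}\bar{\lambda}_k)\pi_{k,l}(\nabla_{\lambda_l}f_l)$ and $B_{l,k}=\eta_l(\nabla_{e_l}\bar{\lambda}_l)\pi_{k,l}$. The structural fact I would exploit is inherited from the bipartite coupling: there are no viewer--viewer or provider--provider entries off the diagonal, so all feedback runs through the cross-blocks $A$ and $B$.

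Next I would reduce the stability test to a condition on the provider block alone. Eliminating the viewer perturbations (equivalently, substituting the viewer update into the provider update, or taking the Schur complement of the $D_V$ block) shows that an eigenvalue $\mu$ of $\nabla_\lambda S$ with $\mu\neq 1-\eta_k$ for all $k$ must satisfy $\det[(D_P-\mu I)-B(D_V-\mu I)^{-1}A]=0$. I would then read off the diagonal of $B(D_V-\mu I)^{-1}A$ and use $\pi_{k,l}^2\le\pi_{k,l}$ to bound it; this isolates, for each provider group $l$, the self-reinforcing loop $\lambda_l\to\{s_k\}\to\{\lambda_k\}\to e_l\to\lambda_l$ whose gain is exactly $\eta_l(\nabla_{e_l}\bar{\lambda}_l)(\nabla_{\lambda_l}f_l)\sum_k\eta_k(\nabla_{s_k}\bar{\lambda}_k)\pi_{k,l}$, while the extra factor $1-\eta_l$ enters from the retention term carried by the two-step propagation provider$\to$viewer$\to$provider.

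A Gershgorin-type (equivalently Jury/Schur--Cohn) argument on the reduced matrix then shows it has no root of modulus at least one as soon as, for every $l$, $\eta_l(1-\eta_l)(\nabla_{e_l}\bar{\lambda}_l)(\nabla_{\lambda_l}f_l)\sum_{k=1}^K\eta_k(\nabla_{s_k}\bar{\lambda}_k)\pi_{k,l}<1$, which is the claimed condition; combined with Definition~\ref{def:stability} this gives stability of $\blambda_{eq}$. Proposition~\ref{coro:sufficient_eq} then follows immediately by inserting the uniform bounds $(\nabla_{e_l}\bar{\lambda}_l)(\nabla_{\lambda_l}f_l)\le C_1$, $(\nabla_{s_k}\bar{\lambda}_k)\le C_2$, $\eta_k\le\eta$, and $\eta_l(1-\eta_l)\le\tfrac14$.

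The main obstacle I anticipate is precisely this reduction: passing rigorously from the genuinely coupled spectral condition on the $(K+L)\times(K+L)$ Jacobian to the clean decoupled per-$l$ inequality, and in particular justifying where the factor $\eta_l(1-\eta_l)$ and the weighted sum $\sum_k\eta_k(\nabla_{s_k}\bar{\lambda}_k)\pi_{k,l}$ (with first powers of $\pi_{k,l}$) come from in the resolvent $B(D_V-\mu I)^{-1}A$ evaluated near the instability boundary. Handling the off-diagonal terms of the reduced matrix---rather than just its diagonal---is the part that needs care; the remaining steps are routine differentiation and constant bookkeeping.
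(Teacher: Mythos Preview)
Your plan is essentially the paper's: linearize $S$ at $\blambda_{eq}$, write the Jacobian in the same $2\times 2$ block form (your $D_V,A,B,D_P$ are exactly the paper's $A_{1,1},A_{1,2},A_{2,1},A_{2,2}$, with the same entries), and reduce to the provider block via a Schur complement. The only difference is in how the reduction is executed. The paper uses the determinant factorization $\det(\nabla_\lambda S)=\det(A_{1,1})\det(A')$ with $A'=A_{2,2}-A_{2,1}A_{1,1}^{-1}A_{1,2}$, asserts that the eigenvalues of $\nabla_\lambda S$ split as those of $A_{1,1}$ together with those of $A'$, and then writes down $\{\mu_2\}_l=\eta_l(1-\eta_l)(\nabla_{e_l}\bar{\lambda}_l)(\nabla_{\lambda_l}f_l)\sum_k\eta_k(\nabla_{s_k}\bar{\lambda}_k)\pi_{k,l}$ directly as the $l$-th such eigenvalue, without further argument. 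Your version instead works with the $\mu$-dependent Schur complement of the characteristic matrix and proposes a Gershgorin/Schur--Cohn bound---more cautious, but aimed at the same target.

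The obstacle you flag---justifying the passage from the coupled spectral condition on the full Jacobian to the decoupled per-$l$ inequality, pinning down the factor $\eta_l(1-\eta_l)$, and controlling the off-diagonal entries of the reduced block---is precisely the step the paper takes for granted: it simply reads off the stated expression as if $A'$ were diagonal with those entries. So you are not on a different route; you are on the paper's route, with the heuristic step made explicit. Carrying through the Gershgorin argument you sketch would fill in what the paper leaves implicit, and the derivation of Proposition~\ref{coro:sufficient_eq} from the theorem via $\eta_l(1-\eta_l)\le\tfrac14$ is identical to the paper's.
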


\begin{proof}
We prove the condition for the stable equilibrium. Let $S(\cdot)$ be the dynamics function that maps the population from the previous timestep to the next timestep, i.e., $\lambda_{t+1} = S(\lambda_{t})$. Then we have
\begin{align*}
    \lambda_{t+1} - \lambda_{eq} = (\nabla_{\lambda} S) (\lambda_{t} - \lambda_{eq}).
\end{align*}
This matrix is decomposed into four sub-matrices as 
\begin{align*}
    (\nabla_{\lambda} S) = \begin{bmatrix}
    A_{1,1} & A_{1,2} \\
    A_{2,1} & A_{2,2}
\end{bmatrix}
\end{align*}
where $A_{1,1}$ is a ($K, K$)-dimensional matrix, $A_{1,2}$ is a ($K, L$)-dimensional matrix, $A_{2,1}$ is a ($L, K$)-dimensional matrix, and $A_{2,2}$ is a ($L, L$)-dimensional matrix. From the dynamics equations~\ref{eq:user_dynamics} and \ref{eq:content_dynamics}, the element of each matrix is derived as
\begin{align*}
    \{A_{1,1}\}_{k,k'} 
    &= (\nabla_{\lambda_{k'}} \lambda_k) 
    = (1 - \eta_k) \mathbb{I} \{ k = k' \} \\
    \{A_{2,2}\}_{l,l'} 
    &= (\nabla_{\lambda_{l'}} \lambda_l) 
    = (1 - \eta_l) \mathbb{I} \{ l = l' \} \\
    \{A_{1,2}\}_{k,l} 
    &= (\nabla_{\lambda_{l}} \lambda_k) 
    = \eta_k (\nabla_{q_k} \bar{\lambda}_k) \pi_{k,l} (\nabla_{\lambda_l} f_l) \\
    \{A_{2,1}\}_{l,k} 
    &= (\nabla_{\lambda_{k}} \lambda_l) 
    = \eta_l (\nabla_{e_l} \bar{\lambda}_l) \pi_{k,l}
\end{align*}.
When the spectrum radius (i.e., the maximum eigenvalues) of $(\nabla_{\lambda} S)$ is less than $1$, $\lambda_{eq}$ is a stable equilibrium of the dynamics. Here, because $A_{1,1}$ and $A_{2,2}$ are invertible matrices, we can use the Schur complement as
\begin{align*}
    \mathrm{det}(\nabla_{\lambda}S) = \mathrm{det}(A_{1,1}) \, \mathrm{det}(\underbrace{A_{2,2} - A_{2,1} A_{1,1}^{-1} A_{1,2}}_{:=A'}).
\end{align*}
Therefore, when the eigenvalues of $A_{1,1}$ is $\mu_1$ and that of $A'$ is $\mu_2$, 
the eigenvalues of $(\nabla_{\lambda} S)$ is $\mu = [\mu_1, \mu_2$]. Thus, the eigenvalues of $(\nabla_{\lambda} S)$ are
\begin{align*}
    \{\mu_{1}\}_k &= (1 - \eta_k) \\
    \{\mu_{2}\}_l &= \eta_l (1 - \eta_l) (\nabla_{e_l} \bar{\lambda}_l) (\nabla_{\lambda_l} f_l) \sum_{k=1}^{K} \eta_{k} (\nabla_{q_k} \bar{\lambda}_k) \pi_{k,l}
\end{align*}
\end{proof}

When $\eta_k \leq \eta, \forall k \in [K]$ hold and the gradient norm is bounded as $(\nabla_{e_l} \bar{\lambda}_l)(\nabla_{\lambda_l} f_l) \leq C_1$ and $(\nabla_{q_k} \bar{\lambda}_k) \leq C_2$ at $\lambda_{eq}$, we have Proposition~\ref{coro:sufficient_eq}:
\begin{align*}
    \sum_{k=1}^{K} \pi_{k,l} \leq \frac{4 \eta^{-1}}{C_1 C_2},
\end{align*}
where we use $\forall \eta_l \in [0, 1), \eta_l(1- \eta_l) \leq 0.25$.

\vspace{3mm}

\subsection{Proof of Theorem~\ref{thm:ne_exist}}\label{app:exist_eq}

\begin{proof}
\textbf{Existence of NE:} First we show the Nash equilibrium must exist. Note that $u_k, v_l$ take negative values as $\lambda_k,\lambda_l$ become sufficiently large, we can without loss of generality assume each player's strategy is upper bounded by a finite constant. As a result, for each player in the game, its strategy set is a convex, closed, and bounded region, and its utility function is clearly concave in its own strategies. According to Theorem 1 in \cite{rosen1965existence}, such a game is a concave $n$-person game and its Nash equilibrium must exist.

\textbf{Non-uniqueness of NE:}
Next, we give an example showing that the Nash equilibrium is not necessarily unique, if we do not impose any assumption on $\bar{\lambda}_k,\bar{\lambda}_l,q$. Consider the case when $K=L=1$ and the following configurations
    \begin{align*} \bar{\lambda}_k(\lambda^{(c)})=\sigma(4(2\lambda^{(c)}-1)), \quad \bar{\lambda}_l(\lambda^{(u)})=\sigma(3(2\lambda^{(u)}-1)),
    \end{align*}
    where $\sigma(x)=1/(1+\exp(-x))$ is the sigmoid function. In this case, the two players have the following utility functions
    \begin{align}\notag
        & u_1(\lambda^{(u)},\lambda^{(c)})=\lambda^{(u)}\cdot \sigma(4(2\lambda^{(c)}-1)) - \frac{1}{2} (\lambda^{(u)})^2,\\\label{eq:nega_example}
        & u_2(\lambda^{(c)},\lambda^{(u)})=\lambda^{(c)}\cdot \sigma(3(2\lambda^{(u)}-1)) - \frac{1}{2} (\lambda^{(c)})^2.
    \end{align}
    Any fixed point of system \eqref{eq:nega_example} should satisfy the following first-order condition
    \begin{equation}\label{eq:20}
    \lambda^{(u)}=\sigma(4(2\lambda^{(c)}-1)),\quad \lambda^{(c)}=\sigma(3(2\lambda^{(u)}-1)),
    \end{equation}
    and we can easily verify that Eq. \eqref{eq:20} has three solutions
    \begin{equation}\label{eq:24}
        [\lambda^{(u)}, \lambda^{(c)}]=[0.0278, 0.0555], \quad [0.5, 0.5], \quad [0.9722, 0.9445].
    \end{equation}
    On the other hand, since each player's utility function is strictly concave in its own strategy, any fixed point of system \eqref{eq:nega_example} must correspond to a Nash equilibrium of the game. Hence, the game has three distinct Nash equilibria, which are given by Eq. \eqref{eq:24}.
    
 \textbf{Convergence of two-sided dynamics:} According to Theorem \ref{thrm:condition_eq}, we know that two-sided dynamics converge to some stable fixed point $\blambda_{eq}$, as long as for each reactiveness hyperparam $\eta$, it holds that
 \begin{align}\label{eq:hyper_param_eta_condition}
    \sum_{k=1}^{K} \pi_{k,l} \leq \frac{4 \eta^{-1}}{C_1 C_2},
\end{align}
where $C_1$ and $C_2$ are upper bounds of $(\nabla_{e_l} \bar{\lambda}_l)(\nabla_{\lambda_l} f_l)$ and $(\nabla_{q_k} \bar{\lambda}_k)$ at $\blambda_{eq}$. A sufficient condition for Eq. \eqref{eq:hyper_param_eta_condition} to hold is $\eta\leq \frac{4}{KC_1C_2}$. 

Next we argue that $\blambda_{eq}$ must also correspond to the Nash equilibrium of $\G$ if $\eta\leq \frac{4}{KC_1C_2}$. This is because $\blambda_{eq}$ being a stable point means that for each viewer $k$ and provider $l$ under $\blambda_{eq}$, they cannot alter their strategies unilaterally to improve their payoffs $u_k$ or $v_l$ in a small region around $\blambda_{eq}$. That is, $\lambda_k,\lambda_l$ given by $\blambda_{eq}$ are the local maximum points of $u_k$ and $v_l$ under $\blambda_{eq}$. Since both $u_k$ and $v_l$ are strictly concave quadratic functions in $\lambda_k$ and $\lambda_l$, their local maximum points must also be the global maximum points. Hence, they also cannot unilaterally improve their payoffs in their entire strategy sets. This demonstrates that $\blambda_{eq}$ satisfies the definition of Nash equilibrium.

\end{proof}

\subsection{Proof of Theorem~\ref{prop:regret}} \label{proof:regret}

\begin{proof}
Here, we provide a proof of the regret decomposition. First of all, we have
\begin{align*}
    &R(\bpi^{\ast}; \blambda_t^{\ast}) - R(\bpi_t; \blambda_t) \\
    &= R(\bpi^{\ast}; \blambda_t^{\ast}) \textcolor{blue}{- R(\bpi_t^{1, \ast}; \blambda_t^{\ast}) + R(\bpi_t^{1, \ast}; \blambda_t^{\ast})} \textcolor{dkred}{- R(\bpi_t^{1}; \blambda_t) + R(\bpi_t^{1}; \blambda_t)} - R(\bpi_t; \blambda_t) \\
    &= \Delta R (\blambda_t^{\ast},  \blambda_t) + \Delta R (\bpi_t^{1}, \bpi_t) + \text{const.}
\end{align*}
where 
\begin{itemize}
    \item $\Delta R (\blambda_t^{\ast},  \blambda_t) := R(\bpi_t^{1,\ast}; \blambda_t^{\ast}) - R(\bpi_t^{1}; \blambda_t)$ is the regret arises from the population difference of stationary optimal policy ($\bpi^{\ast}$) and the policy of interest ($\bpi_t$). 
    \item $\Delta R (\bpi_t^{1}, \bpi_t) := R(\bpi_t^{1}; \blambda_t) - R(\bpi_t; \blambda_t)$ is the one-step regret of the policy.
    \item $\text{const.} = R(\bpi_t^{1, \ast}; \blambda_t^{\ast}) - R(\bpi_t^{\ast}; \blambda_t^{\ast})$ is the one-step regret of the stationally optimal policy. This term does not depend on $\bpi$ and only depends on $\bpi^{\ast}$.
\end{itemize}
Then, let $\blambda_t^{\pi}$ to be the population dynamics of a stationary policy $\bpi$. From the assumption about the policy convergence, 
\begin{align*}
    \forall \delta, \delta' > 0, \; \exists T_0 \in \mathbb{Z}, \; 
    \text{s.t.}, \; \forall t > T_0, \; D(\bpi, \bpi_t) < \delta' \; \text{and} \; D(\blambda_t^{\bpi}, \blambda_t) < \delta
\end{align*}
holds true.
Thus, we have $\frac{1}{T} \sum_{t=1}^T \Delta R (\blambda_t^{\pi}, \blambda_t) \leq \mathcal{O}(\delta / T)$. Therefore, 
\begin{align*}
    \text{Regret}(\bpi)
    &= \frac{1}{T} \sum_{t=1}^T \left( R(\bpi^{\ast}; \blambda_t^{\ast}) - R(\bpi_t; \blambda_t) \right) \\
    &= \frac{1}{T} \sum_{t=1}^T \left( \Delta R (\blambda_t^{\ast},  \blambda_t) + \Delta R (\bpi_t^{1}, \bpi_t) + \text{const.} \right) \\
    &= \frac{1}{T} \sum_{t=1}^T \left( \Delta R (\blambda_t^{\ast},  \blambda_t^{\pi}) + \Delta R (\blambda_t^{\pi},  \blambda_t) + \Delta R (\bpi_t^{1}, \bpi_t) \right) + \text{const.} \\
    &= \underbrace{\frac{1}{T} \sum_{t=1}^T \Delta R (\blambda_t^{\ast},  \blambda_t^{\bpi})}_{(1)} + \underbrace{\frac{1}{T} \sum_{t=1}^T \Delta R (\bpi_t^{1}, \bpi_t)}_{(2)} + \mathcal{O} \left( \frac{\delta}{T} \right) + \text{const.}
\end{align*}

\vspace{3mm}

\end{proof}

\subsection{Proof of Theorem~\ref{thrm:optimal_greedy}} \label{app:optimal_greedy}
\begin{proof}

When $f,\bar{\lambda}$ are linear functions, from Theorem \ref{thm:ne_exist} we know that the NE of $\G(\bpi, B, f, \bar{\lambda})$ exists and is unique. For any fixed $\bpi$, let $\blambda_{\infty}=(\blambda_u^*,\blambda_c^*)$ denote the NE obtained under $\bpi$. By Proposition \ref{prop:dynamics_equivalence}, $(\blambda_u^*,\blambda_c^*)$ is also the unique stable fixed point of system \eqref{eq:user_dynamics},\eqref{eq:content_dynamics} and therefore satisfies
\begin{equation}\label{eq:ne_first_order}
    \lambda_{k}^{(u)*} = \bar{\lambda}_k\left(\sum_{l=1}^L \pi_{l,k}\left(b_{l,k}+f(\lambda_l^{(c)*})\right)\right),  \quad \text{and}\quad\lambda_l^{(c)*}=\bar{\lambda}_l\left(\sum_{k=1}^K \pi_{l,k}\lambda_k^{(u)*}\right), \quad\forall 1\leq l\leq L, 1\leq k\leq K.
\end{equation}

Next, we derive the closed-form of $(\blambda_u^*,\blambda_c^*)$. Suppose 
$$f(x)=a_0 x+b_0, \quad\bar{\lambda}_k(x)=a_1 x+b_1, \quad\bar{\lambda}_l(x)=a_2 x+b_2, \quad a_0,a_1,a_2>0.$$ From Eq. \eqref{eq:ne_first_order} we know $(\blambda_u^*,\blambda_c^*)$ is the unique solution to the following linear system
\begin{equation}\label{eq:linear_eq_lambda_mu}
    \begin{bmatrix}
        I_K & -a_0a_1\bpi^{\top} \\
        -a_2\bpi & I_L
    \end{bmatrix}
    \begin{bmatrix}
        \blambda_u^*  \\
        \blambda_c^*
\end{bmatrix}=\left[a_1\sum_{l=1}^L\pi_{l,1}(b_{l,1}+b_0)+b_1,\cdots,a_1\sum_{l=1}^L\pi_{l,K}(b_{l,K}+b_0)+b_1,b_2,\cdots,b_2\right]^{\top},
\end{equation}
where $I_K,I_L$ denote the identity matrices of sizes $K,L$. Since $\sum_{l=1}^L \pi_{l,k}=1$, we have 
$$a_1\sum_{l=1}^L\pi_{l,k}(b_{l,k}+b_0)+b_1=a_1\sum_{l=1}^L\pi_{l,k}\left(b_{l,k}+b_0+\frac{b_1}{a_1}\right), \forall 1\leq k\leq K.$$
Without loss of generality, we let $b_0,b_1=0$ hereafter, since we can always absorb the term $b_0+\frac{b_1}{a_1}$ into $B$ by letting $\tilde{b}_{l,k}=b_{l,k}+b_0+\frac{b_1}{a_1}$. As a result, from Eq. \eqref{eq:linear_eq_lambda_mu} we can obtain the closed-form solution for $(\blambda_u^*,\blambda_c^*)$ as follows:

\begin{align}\notag
   \begin{bmatrix}
        \blambda_u^*  \\
        \blambda_c^*
\end{bmatrix}&=\begin{bmatrix}
        (I_K-a_0a_1a_2\bpi^{\top}\bpi)^{-1} & (I_K-a_0a_1a_2\bpi^{\top}\bpi)^{-1}a_0a_1\bpi^{\top}\\
        a_2\bpi(I_K-a_0a_1a_2\bpi^{\top}\bpi)^{-1} & I_L+a_0a_1a_2\bpi(I_K-a_0a_1a_2\bpi^{\top}\bpi)^{-1}\bpi^{\top}
\end{bmatrix} \\ \label{eq:solution_lambda_mu}
&\cdot\left[a_1\sum_{l=1}^L\pi_{l,1}b_{l,1},\cdots,a_1\sum_{l=1}^L\pi_{l,u}b_{l,u},b_2,\cdots,b_2\right]^{\top},
\end{align}
where $I_K-a_0a_1a_2\bpi^{\top}\bpi$ is a positive definite matrix. 

On the other hand, the user-side social welfare can be rewritten into 
\begin{align}\notag
     R(\bpi; (\blambda_u^*,\blambda_c^*))=&\sum_{k=1}^{K} \lambda_{k}^{(u)*} \sum_{l=1}^{L} \pi_{l,k} (b_{l,k} + f(\lambda_l^{(c)*})) \\ \notag
    = & \blambda_u^{*\top}\cdot\left[\sum_{l=1}^L\pi_{l,1}b_{l,1},\cdots,\sum_{l=1}^L\pi_{l,K}b_{l,K}\right]^{\top}+a_0\blambda_u^{*\top}\bpi^{\top}\blambda_c^* \\ \notag
    = &  \frac{1}{a_1}\blambda_u^{*\top}(\blambda_u^*-a_0a_1\bpi^{\top}\blambda_c^*) + a_0\blambda_u^{*\top}\bpi^{\top}\blambda_c^* & \mbox{by Eq. \eqref{eq:linear_eq_lambda_mu} } \\ \label{eq:obj_lambda}
    =&\frac{\blambda_u^{*\top}\blambda_u^*}{a_1}.
\end{align}

From Eq. \eqref{eq:solution_lambda_mu} we also have 
\begin{align}\notag
    \blambda_u^{*}&=(I_K-a_0a_1a_2\bpi^{\top}\bpi)^{-1} 
\cdot a_1 \cdot \left[\sum_{l=1}^L \pi_{l,1}b_{l,1}, \cdots, \sum_{l=1}^L \pi_{l,K}b_{l,K}\right]^{\top}  \\ \notag
    &\quad + (I_K-a_0a_1a_2\bpi^{\top}\bpi)^{-1} a_0 a_1 \bpi^{\top} \cdot b_2 \bm{1}_L \\\label{eq:lambda_eq}
    &= a_1(I_K - a_0 a_1 a_2 \bpi^{\top} \bpi)^{-1} \left[\text{diag}(\bpi^{\top}B) + a_0 b_2 \bm{1}_K\right], & \text{by } \bpi^{\top} \bm{1}_L = \bm{1}_K.
\end{align}
Plug Eq. \eqref{eq:lambda_eq} into Eq. \eqref{eq:obj_lambda}, we obtain the following explicit expression of $R$ for any $\bpi$ and $B$:

\begin{equation}\label{eq:user_welfare_linear_closed}
   R(\bpi;\blambda_{\infty})=R(\bpi; (\blambda_u^*,\blambda_c^*))= a_1\left\|(I_K - a_0 a_1 a_2 \bpi^{\top} \bpi)^{-1} \left[\text{diag}(\bpi^{\top}B) + a_0 b_2 \bm{1}_K\right]\right\|_2^2,
\end{equation}
where $\bm{1}_K=(1,\cdots,1)^{\top}$ is a column vector of length $K$. Let $\sigma_{\max}(\cdot)$ and $\sigma_{\min}(\cdot)$ denotes the largest and the smallest eigenvalue of a matrix. Then Eq. \eqref{eq:user_welfare_linear_closed} implies

\begin{equation}
   R_{-}(\bpi; \blambda_{\infty})\triangleq\frac{a_1\left\|\text{diag}(\bpi^{\top}B) + a_0 b_2 \bm{1}_K\right\|_2^2}{\sigma^2_{\max}(I_K - a_0 a_1 a_2 \bpi^{\top} \bpi)}\leq R(\bpi; \blambda_{\infty})\leq \frac{a_1\left\|\text{diag}(\bpi^{\top}B) + a_0 b_2 \bm{1}_K\right\|_2^2}{\sigma^2_{\min}(I_K - a_0 a_1 a_2 \bpi^{\top} \bpi)}\triangleq R_{+}(\bpi; \blambda_{\infty}).
\end{equation}

Next, we consider any $\epsilon$-greedy policy $\bpi^{(\epsilon)}$ w.r.t. $B$ and show that both $R_{-}(\bpi^{(\epsilon)}; \blambda_{\infty}^{(\epsilon)})$ and $R_{+}(\bpi^{(\epsilon)}; \blambda_{\infty}^{(\epsilon)})$ as functions of $\epsilon$ are monotonically decreasing in $\epsilon\in[0,1]$. Without loss of generality, we may assume the greedy recommendation policy $\bpi_0$ has the following form:
\[\bpi_0=
\II\{b_{l,k}=\arg\max_{1\leq i\leq L}b_{i,k}\}]_{L\times K} =
\begin{bmatrix}\label{eq:pi_0}
\bm{1}_{K_1} & \bm{0} & \cdots & \bm{0} & \bm{0}\\
\bm{0} & \bm{1}_{K_2} & \cdots & \bm{0} & \bm{0}\\
\vdots & \vdots & & \vdots & \vdots\\
\bm{0} & \bm{0} & \cdots & \bm{0} & \bm{1}_{K_m} \\
\bm{0} & \bm{0} & \cdots & \bm{0} & \bm{0}
\end{bmatrix},
\]
i.e., all user groups are clustered into $m$ sub-groups and each has size $K_m$. Each user within a sub-group prefers the same content group and users from different sub-groups prefer different content groups. The total number of user sub-groups $m$ satisfies $1\leq m \leq L$, $K=K_1+\cdots+K_m$, and $K_1\geq\cdots\geq K_m$.

Denote $\bm{b}_0=\text{diag}(\bpi_0^{\top}B)=[\max_{1\leq l\leq L}\{b_{l,k}\}]_{k=1}^k$, and $\bm{b}_1=[ \frac{1}{L}\sum_{l=1}^L b_{l,k}]_{k=1}^K$. By plugging $\bpi_{\epsilon}=(1-\epsilon)\bpi_0 +\frac{\epsilon}{L} \1_{L\times K}$ into Eq. \eqref{eq:user_welfare_linear_closed}, we obtain 
\begin{align*}
    \text{diag}(\bpi_{\epsilon}^{\top}B) + a_0 b_2 \bm{1}_K = (1-\epsilon)\bm{b}_0 + \epsilon \bm{b}_1+ a_0 b_2 \bm{1}_K.
\end{align*}
Since $\bm{b}_0\geq \bm{b}_1$ elementary-wise, we conclude that $\|\text{diag}(\bpi_{\epsilon}^{\top}B) + a_0 b_2 \bm{1}_K\|_2$ as a function of $\epsilon$ is decreasing in $[0,1]$.

On the other hand, direct calculation shows 
\begin{align*}
    I_K - a_0 a_1 a_2 \bpi_{\epsilon}^{\top} \bpi_{\epsilon} = &I_k - a_0a_1a_2 \left[(1-\epsilon)\bpi^{\top}_0 +\frac{\epsilon}{L} \1_{K\times L}\right] \cdot \left[(1-\epsilon)\bpi_0 +\frac{\epsilon}{L} \1_{L\times K}\right] \\ 
    =& I_K - a_0a_1a_2(1-\epsilon)^2\bpi^{\top}_0\bpi_0 - \frac{a_0a_1a_2\epsilon(2-\epsilon)}{L}\1_{K\times K}.
\end{align*}

Given the explicit form of the block matrix $\pi_0$, we can directly compute the smallest and the largest eigenvalues of matrix $I_K - a_0 a_1 a_2(1-\epsilon)^2 \bpi_0^{\top} \bpi_0$ as followings: 
\begin{align}\notag
    & \sigma_{\max}(I_K - a_0 a_1 a_2(1-\epsilon)^2 \bpi_0^{\top} \bpi_0)=1, \\ \notag
    & \sigma_{\min}(I_K - a_0 a_1 a_2(1-\epsilon)^2 \bpi_0^{\top} \bpi_0)= 1-a_0a_1a_2(1-\epsilon)^2K_1.
\end{align}

In addition, from Weyl's inequality \cite{fan1949theorem,bunch1978rank}, we conclude that 
\begin{align}\notag
     \sigma_{\max}(I_K - a_0 a_1 a_2 \bpi_{\epsilon}^{\top} \bpi_{\epsilon})&=1, \\ \notag
     \sigma_{\min}(I_K - a_0 a_1 a_2 \bpi_{\epsilon}^{\top} \bpi_{\epsilon}) &\geq \sigma_{\min}(I_K - a_0 a_1 a_2(1-\epsilon)^2 \bpi_0^{\top} \bpi_0) - \sigma_{\max}\left(\frac{a_0a_1a_2\epsilon(2-\epsilon)}{L}\1_{K\times K}\right) \\ \notag 
     & = 1-a_0a_1a_2(1-\epsilon)^2K_1 - \frac{a_0a_1a_2\epsilon(2-\epsilon)K}{L} \\
     \label{eq:sigma_min_lower}
     &= 1-a_0a_1a_2K_1+ a_0a_1a_2\epsilon(2-\epsilon)\left(K_1-\frac{K}{L}\right).
\end{align}
Note that by the definition of $K_1$, it holds that $K_1\geq \frac{K}{m} \geq \frac{K}{L}$. Hence, the lower bound of $\sigma_{\min}$ in Eq. \eqref{eq:sigma_min_lower} is an increasing function in $\epsilon\in[0,1]$.

Hence, we have 
\begin{align}\label{eq:W_lower}
   & R_{-}(\bpi^{(\epsilon)}; \blambda_{\infty}^{(\epsilon)})=a_1\|(1-\epsilon)\bm{b}_0 + \epsilon \bm{b}_1+ a_0 b_2 \bm{1}_K\|_2^2, \\\label{eq:W_upper}
   & R_{+}(\bpi^{(\epsilon)}; \blambda_{\infty}^{(\epsilon)})\leq \frac{a_1\|(1-\epsilon)\bm{b}_0 + \epsilon \bm{b}_1+ a_0 b_2 \bm{1}_K\|_2^2}{\left(1-a_0a_1a_2K_1- a_0a_1a_2\epsilon(2-\epsilon)\left(K_1-\frac{K}{L}\right)\right)^2},
\end{align}
and the RHS of both Eqs. \eqref{eq:W_lower} and \eqref{eq:W_upper} are decreasing functions of $\epsilon$ in $[0,1]$.

Take $g(\epsilon)=a_1\|(1-\epsilon)\bm{b}_0 + \epsilon \bm{b}_1+ a_0 b_2 \bm{1}_K\|_2^2$, and $h(\epsilon)=\left(1-a_0a_1a_2K_1- a_0a_1a_2\epsilon(2-\epsilon)\left(K_1-\frac{K}{L}\right)\right)^{-2}$, we conclude that  
$$g(\epsilon)\leq R(\bpi^{(\epsilon)}; \blambda_{\infty}^{(\epsilon)})\leq g(\epsilon)h(\epsilon),$$
and $h(\epsilon)=\left(1-a_0a_1a_2K_1- a_0a_1a_2\epsilon(2-\epsilon)\left(K_1-\frac{K}{L}\right)\right)^{-2}<(1-2a_0a_1a_2K)^{-2}$. Since by definition $(\nabla_{\lambda_l} f_{k,l})(\nabla_{e_l} \bar{\lambda}_l) (\nabla_{s_k} \bar{\lambda}_k)=a_0a_1a_2$, our claim holds.
\end{proof}

\subsection{Proof of Proposition~\ref{prop:heterogeneous_f}} \label{proof:heterogeneous_f}
\begin{proof}
Consider a two-sided system with $K=1, L=2, B=[1, 0.9]^{\top}$ and $$\bar{\lambda}^{{u}}_1(x)=a_0x,\bar{\lambda}^{{c}}_1(x)=a_1x,\bar{\lambda}^{{c}}_2(x)=a_2x, f_1(x)=b_1x,f_2(x)=b_2x.$$ According to Theorem \ref{thm:ne_exist}, the NE of the system exists and is unique when $a_0, a_1, a_2>0$ are sufficiently small. Moreover, $(\lambda^{(u)}_1,\lambda^{(c)}_1,\lambda^{(c)}_2)$ at the NE satisfies

\begin{align*}
    & \lambda^{(u)}_1=\bar{\lambda}^{{(u)}}_1(\pi_{11}f_1(\lambda^{(c)}_1)+\pi_{21}f_2(\lambda^{(c)}_2)+\pi_{11}+0.9\pi_{21}), \\ 
    & \lambda^{(c)}_1=\bar{\lambda}^{(c)}_1(\pi_{11}\lambda^{(u)}_1), \\
    & \lambda^{(c)}_2=\bar{\lambda}^{(c)}_2(\pi_{21}\lambda^{(u)}_1),
\end{align*}
which is equivalent to
\begin{equation}\label{eq:151}
\begin{cases} 
& \lambda^{(u)}_1=a_0(b_1\pi_{11}\lambda^{(c)}_1+b_2\pi_{21}\lambda^{(c)}_2+\pi_{11}+0.9\pi_{21}), \\
& \lambda^{(c)}_1=a_1 \pi_{11}\lambda^{(u)}_1, \\ 
& \lambda^{(c)}_2=a_2 \pi_{21}\lambda^{(u)}_1.\\ 
\end{cases}
\end{equation}

Plugin the last two equations into the first one in Eq. \eqref{eq:151}, we obtain that 

$$\lambda^{(u)}_1=a_0(a_1b_1\pi^2_{11}\lambda^{(u)}_1+a_2b_2\pi^2_{21}\lambda^{(u)}_1+\pi_{11}+0.9\pi_{21}),$$

and therefore 
\begin{align*}
\lambda^{(u)}_1&=\frac{a_0(\pi_{11}+0.9\pi_{21})}{1-a_0a_1b_1\pi^2_{11}-a_0a_2b_2\pi^2_{21}}   \\ 
&=\frac{a_0(0.9+0.1\pi_{11})}{1-a_0a_1b_1\pi^2_{11}-a_0a_2b_2(1-\pi_{11})^2}.
\end{align*}

Now we can write $R(\bpi;\blambda^*)$ as a function of $\pi_{11}$ as the following:
\begin{align*}
R(\bpi;\blambda^*)&=\lambda^{(u)}_1(\pi_{11}f_1(\lambda^{(c)}_1)+\pi_{21}f_2(\lambda^{(c)}_2)+\pi_{11}+0.9\pi_{21})=\frac{1}{a_0}(\lambda_1^{(u)})^2 \\ 
&= \frac{a_0(0.9+0.1\pi_{11})^2}{[1-a_0a_1b_1\pi^2_{11}-a_0a_2b_2(1-\pi_{11})^2]^2}.
\end{align*}

Take $b_1=0, a_0a_2b_2=0.4$, we have 
\begin{equation}\label{eq:176}
    \sqrt{\frac{R(\bpi,\blambda^*)}{a_0}}=\frac{0.9+0.1\pi_{11}}{1-0.4(1-\pi_{11})^2}\triangleq\tilde{R}(\pi_{11}),
\end{equation}
and it is easy to verify that the RHS of Eq. \eqref{eq:176} is not achieved at $\pi_{11}=1$. In fact, for any $\pi'_{11}<0.7$, it holds that $\tilde{R}(\pi'_{11})>\tilde{R}(1)=1$. This means the greedy policy $[\pi_{11}, \pi_{21}]=[1, 0]$ is not optimal in this example.

\end{proof}

\section{Additional experiment settings and observations} \label{app:experiment_details}

Here, we report additional details of the experiment settings and results.

\textbf{Difference of population effects in the synthetic and real-world experiments.} \; Figure~\ref{fig:synthetic_population_effect} shows the population effect used in the synthetic experiment, defined by Eq.~\eqref{eq:synthetic_population_effect}. The biggest difference between the synthetic and real-world experiment setting is that we observe saturation of the population effects as an early stage of the provider population growth, i.e., around $\lambda_l = 100$, which is also a reasonable phenomenon in real-world situations. Therefore, in the synthetic experiment, it is important to distribute the content exposure among multiple subgroups to receive high population effects in many different provider groups. Thus, even the uniform policy outperforms the myopic-policy in this setting.\footnote{Throughout the experiments, we found that there are tradeoffs in concentrating and distributing exposures, and which is better often depends on the problem instance.
For example, suppose the special case where viewers do not change their population ($\eta_k = 0, \, \forall k \in [K]$) and the total viewer population is fixed to 100. Then, consider a scenario with 100 provider groups. In this situation, distributing exposure among different subgroups can result in expected exposure of 1 for each provider group. In such cases, concentrating the exposure to one provider group can be a better strategy than distributing allocation for the total population growth. This is why myopic policy performs well in some scenarios, and our argument is that the proposed method can work adaptively well (i.e., at least better or competitive than both myopic and uniform) regardless if the myopic policy succeeds or falls short.} 
In contrast, when using KuaiRec dataset~\citep{gao2022kuairec}, the situation is milder than the synthetic experiment, and therefore the myopic policy works well in the real-world experiment. Together, our synthetic and real-world experiments show that the proposed look-ahead policy performs reasonably well in two different configurations. This is because seeking for both (immediate) viewer utility ($s_k$) and provider exposure ($e_l$) is important to maximize the look-ahead objective, which depends on reference populations $\bar{\lambda}(s_k)$ and $\bar{\lambda}(e_l)$.

\textbf{Estimation results of the dynamics and population effect functions.} \;
We also report how the dynamics estimation works in the real-world experiment in Figures~\ref{fig:estimation_population_effect} and \ref{fig:estimation_population_dynamics}. While we initialize the population effect and dynamics estimation with a homogeneous function across viewer-provider pairs, the results demonstrate that our estimation scheme provides an accurate estimation of heterogeneous functions by using the dynamics logs in the rollout process. We also observe that the policy optimization results in the main text (w/ population effect and dynamics estimation) are quite similar to those without dynamics estimation (i.e., using the true dynamics) in the experiment.

\begin{figure*}[t]
\centering
\includegraphics[clip, width=0.95\linewidth]{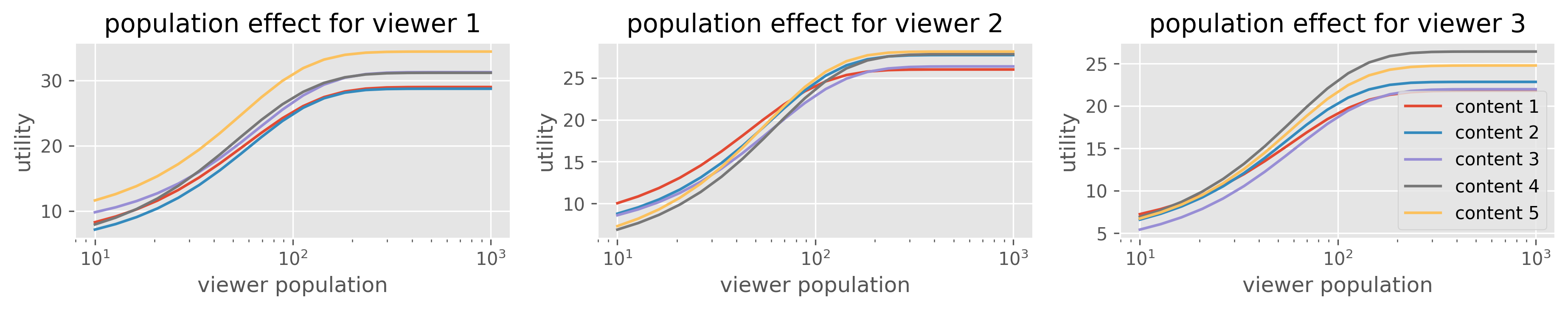}
    \caption{\textbf{Visualization of the population effects in the synthetic experiment.} We randomly sample the scaler and temperature parameter of the sigmoid function from a normal distribution for each content-quality feature pair as described in Section~\ref{sec:synthetic_experiment}. The resulting quality vector is provider-dependent, and the population effects are heterogeneous across viewer-provider pairs.
  } 
  \label{fig:synthetic_population_effect}
\end{figure*}

\begin{figure*}
\begin{minipage}{0.99\hsize}
  \centering
  \includegraphics[clip, width=0.98\linewidth]{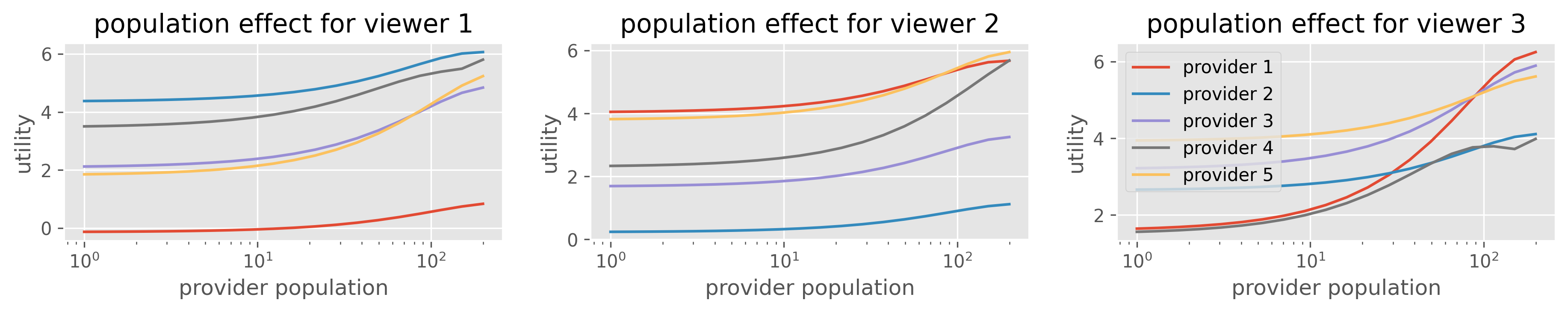} 
  \vspace{1mm}
\end{minipage}
\begin{minipage}{0.99\hsize}
  \centering
  \includegraphics[clip, width=0.98\linewidth]{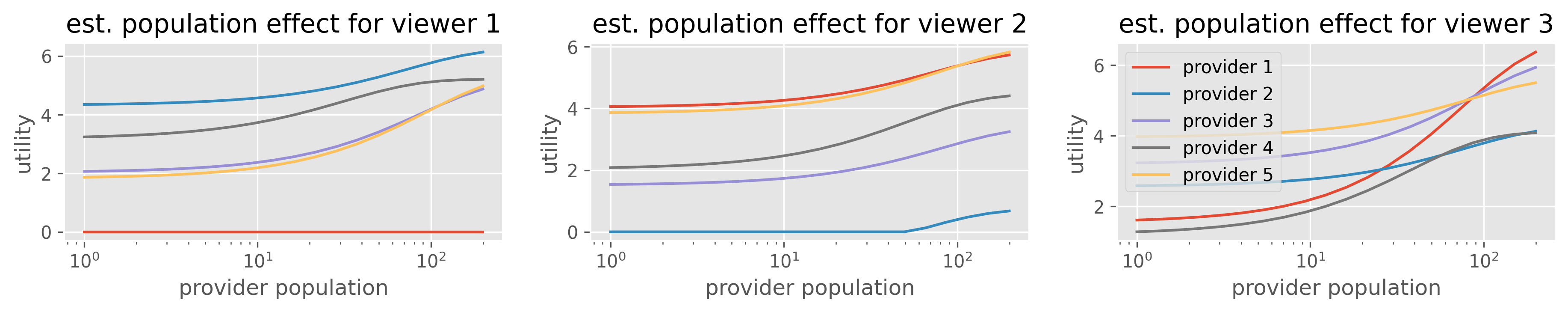} 
  \caption{\textbf{Comparing the true and estimated population effect in the real-world experiment.} (Top) True population effect used in the real-world experiment (the same figure as Figure~\ref{fig:real_population_effect} in the main text). (Bottom) Population effect learned by the long-term optimal policy at the final timestep.} \label{fig:estimation_population_effect}
  \vspace{3mm}
\end{minipage}
\end{figure*}

\begin{figure*}
\begin{minipage}{0.99\hsize}
  \centering
  \includegraphics[clip, width=0.98\linewidth]{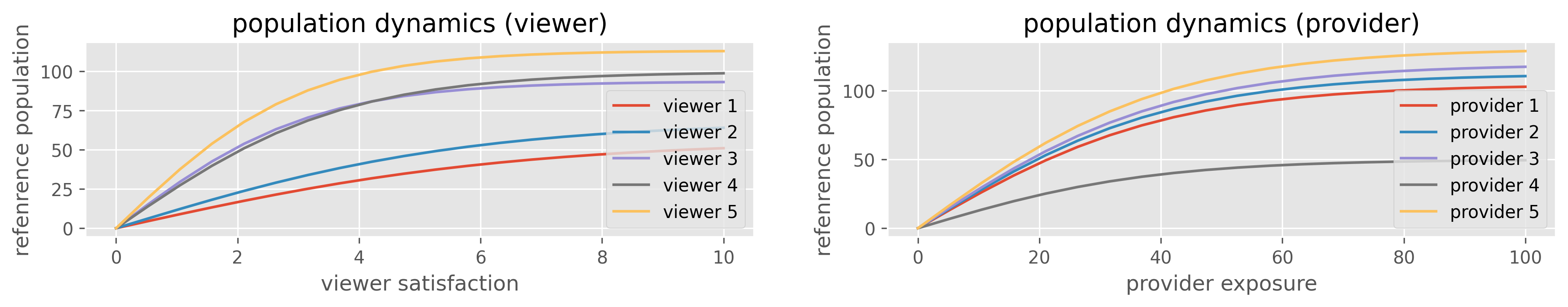} 
  \vspace{1mm}
\end{minipage}
\begin{minipage}{0.99\hsize}
  \centering
  \includegraphics[clip, width=0.98\linewidth]{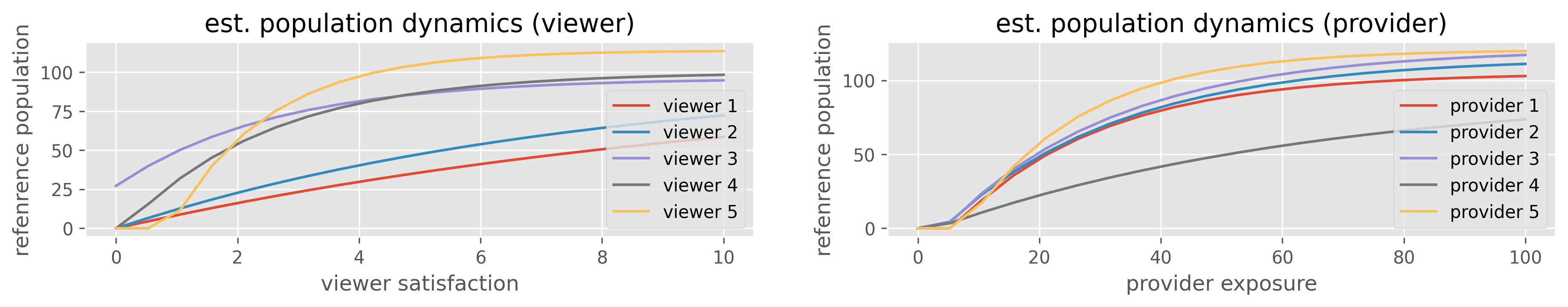} 
  \caption{\textbf{Comparing the true and estimated population dynamics in the real-world experiment.} (Top) True population dynamics simulated in the real-world experiment. (Bottom) Population dynamics learned by the long-term optimal policy at the final timestep.} \label{fig:estimation_population_dynamics}
  \vspace{3mm}
\end{minipage}
\end{figure*}

\end{document}